\newcommand{\threesat}{\textsc{3-SAT}}
\newcommand{\oneinthree}{\textsc{Positive 1-in-3-SAT}}
\newcommand{\nonposoneinthree}{\textsc{1-in-3-SAT}}
\newcommand{\threedm}{\textsc{3-Dimen\-sional Matching}}
\newcommand{\ndm}[1]{\textsc{Numerical $#1$-Dimen\-sional Matching}}
\newcommand{\partition}[1]{\textsc{$#1$-Partition}}
\newcommand{\lengthoffsets}{\textsc{Length Offsets}}
\newcommand{\pathpuzzle}{\textsc{Path Puzzle}}
\newcommand{\problemfont}{}
\newtheorem{theorem}{Theorem}[section]
\newtheorem{lemma}[theorem]{Lemma}
\newtheorem{definition}{Definition}[section]
\newtheorem{defn}[definition]{Definition}
\newtheorem{problem}{Problem}[section]
\def\be{\begin{enumerate}}
\def\ii{\item}
\def\ee{\end{enumerate}}
\def\bal{\begin{align*}}
\def\eal{\end{align*}}
\def\bi{\begin{itemize}}
\def\ei{\end{itemize}}
\def\bp{\begin{proof}}
\def\ep{\end{proof}}
\def\bl{\begin{lemma}}
\def\el{\end{lemma}}
\def\bt{\begin{thm}}
\def\et{\end{thm}}
\def\bc{\begin{cor}}
\def\ec{\end{cor}}
\def\bd{\begin{defn}}
\def\ed{\end{defn}}
\def\bprop{\begin{prop}}
\def\eprop{\end{prop}}
 \gdef\xxxmark{%
   \expandafter\ifx\csname @mpargs\endcsname\relax % in minipage?
     \expandafter\ifx\csname @captype\endcsname\relax % in figure/caption?
       \marginpar{xxx}% not in a caption or minipage, can use marginpar
     \else
       xxx % notice trailing space
     \fi
   \else
     xxx % notice trailing space
   \fi}
 \gdef\xxx{\@ifnextchar[\xxx@lab\xxx@nolab}
 \long\gdef\xxx@lab[#1]#2{{\bf [\xxxmark #2 ---{\sc #1}]}}
 \long\gdef\xxx@nolab#1{{\bf [\xxxmark #1]}}
\newcommand{\removelatexerror}{\let\@latex@error\@gobble}
\begin{document}

\def \isnotin {\nsubseteq}

\def \eps {\varepsilon}

\title{Path Puzzles: Discrete Tomography with a Path Constraint is Hard}
\author{
  Jeffrey Bosboom%
    \thanks{Massachusetts Institute of Technology, \protect\url{{jbosboom,edemaine,mdemaine,achester,jkopin}@mit.edu}}
\and
  Erik D. Demaine\footnotemark[1]
\and
  Martin L. Demaine\footnotemark[1]
\and
  Adam Hesterberg\footnotemark[1]
\and
  Roderick Kimball%
    %\thanks{Bellingham, Washington}
    \thanks{Enigami Puzzles \& Games}
\and
  Justin Kopinsky\footnotemark[1]
}
\date{}
\maketitle

\begin{abstract}
  We prove that path puzzles with complete row and column information---or
  equivalently, 2D orthogonal discrete tomography with Hamiltonicity
  constraint---are strongly NP-complete, ASP-complete, and \#P-complete.
  Along the way, we newly establish ASP-completeness and \#P-completeness for
  \threedm\ and \ndm{3}.
\end{abstract}

\section{Introduction}

Path puzzles are a type of pencil-and-paper logic puzzle introduced in
Roderick Kimball's 2013 book~\cite{Kimball2013} and featured in
\emph{The New York Times}'s Wordplay blog \cite{NYTimes}.
Figure~\ref{fig:example} gives a small example.
% with solution.
A puzzle consists of a (rectangular) grid of cells with two exits
(or ``doors'') on the boundary and numerical constraints
on some subset of the rows and columns. A solution consists of a single
non-intersecting path which starts and ends at two boundary doors and
which passes through a number of cells in each constrained row and column
equal to the given numerical clue.
Many variations of path puzzles are given in~\cite{Kimball2013} and
elsewhere, for example using non-rectangular grids, grid-internal
constraints, and additional candidate doors,
but these generalizations make the problem only harder.

\begin{figure}
\centering
\includegraphics[width=.35\textwidth]{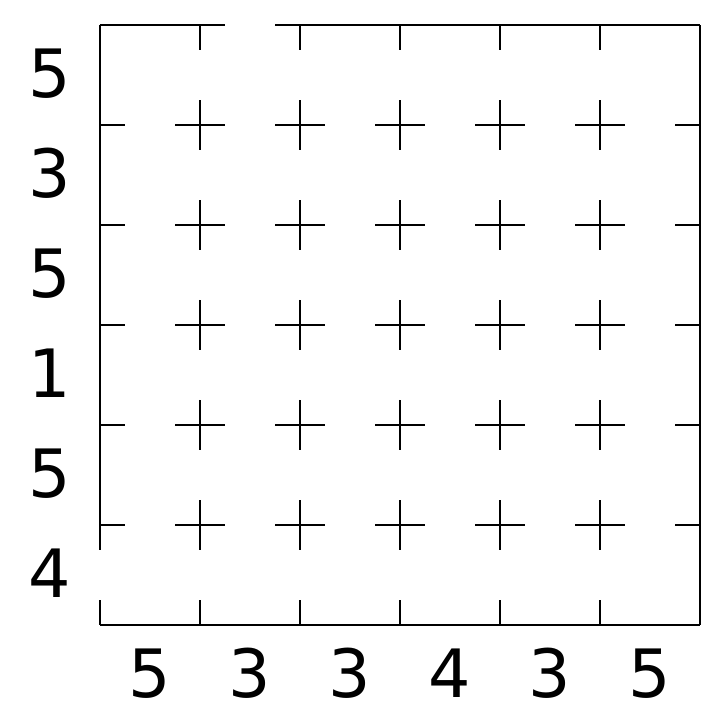}
\hfil
\includegraphics[width=.35\textwidth]{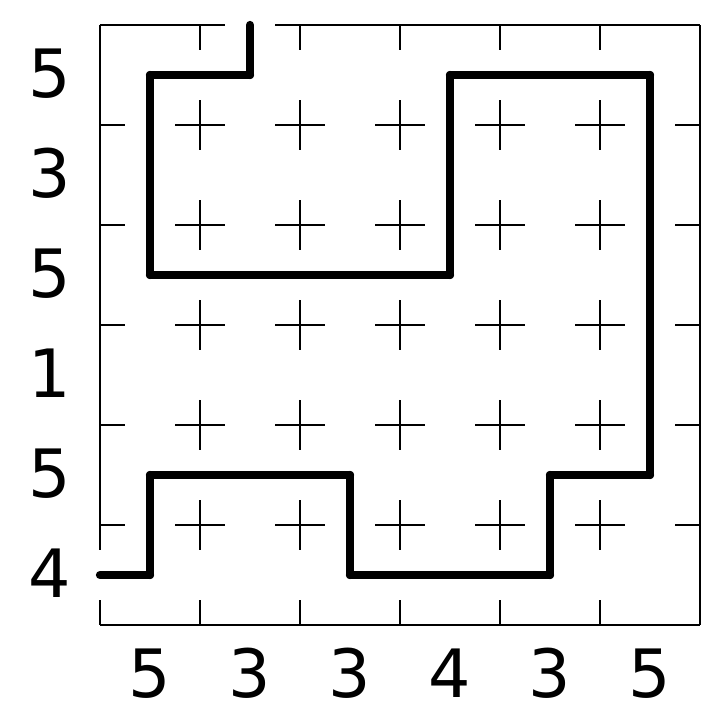}
\caption{A \pathpuzzle{} with complete row/column information (left) and its solution (right).}
\label{fig:example}
\end{figure}

Path puzzles are closely related to
\emph{discrete tomography}~\cite{herman2012discrete}, in particular
the 2D orthogonal form: given the number of black pixels in each row
and column, reconstruct a black-and-white image.  This problem arises
naturally in reconstruction of shapes via x-ray images (which measure density).
Vanilla 2-dimensional discrete tomography is known to
have efficient (polynomial-time) algorithms~\cite{herman2012discrete}, though
it becomes hard under certain connectivity constraints on the output
image~\cite{LungoNivat}.%
\footnote{Most sets of row and column constraints are ambiguous;
constraining the output image makes the problem harder by preventing an easy
image from being found instead.}
A path puzzle is essentially a 2-dimensional discrete tomography
problem with partial information (not all row and column counts) and
an additional Hamiltonicity (single-path) constraint on the output image.

\paragraph{Our results.}
Unlike 2-dimensional discrete tomography, we show that path puzzles
are NP-complete, even with perfect information
(i.e., with all row and column counts specified).
In other words, 2-dimensional discrete tomography becomes NP-complete
with an added Hamiltonicity constraint.
In fact, we prove the stronger results that perfect-information
path puzzles are \textsc{Another Solution Problem} (ASP) hard and
(to count solutions) \#P-complete.
%\footnote{Technically, \#P-hardness only applies to the problem of
%counting the number of distinct solutions to a path puzzle, but for brevity we
%will say that a problem is \#P-hard to mean the solution counting version of
%that problem is \#P-hard.}.

Figure~\ref{fig:chain} shows the chain of reductions we use to prove
hardness of \pathpuzzle.
To preserve hardness for the ASP and \#P classes, our reductions are
\emph{parsimonious}; that is, they preserve the number of solutions
between the source and target problem instances, generally by showing a
one-to-one correspondence thereof.
We start from the source problem of \oneinthree{} which is known to be
ASP-hard \cite{Seta02thecomplexities,Hunt-Marathe-Radhakrishnan-Stearns-1998}
and (to count solutions)
\#P-complete~\cite{Hunt-Marathe-Radhakrishnan-Stearns-1998}.
%Figure~\ref{fig:chain} shows the chain of reductions we follow from \oneinthree{} to path puzzles. 
Along the way, we newly establish strong ASP-hardness and \#P-completeness for
\threedm{}, \ndm{4}, \ndm{3}, and a new problem \lengthoffsets{},
in addition to \pathpuzzle.

\begin{figure*}
\centering
\footnotesize
\usetikzlibrary{arrows,positioning,chains,graphs,matrix,quotes}
\definecolor{mycolor1}{rgb}{0.757, 0.690, 0.867}
%\definecolor{mycolor2}{rgb}{0.647, 0.839, 0.831}
\tikzset{
        problem/.style={rectangle, rounded corners=4pt, draw, line width=1.6pt, text width=2.2cm, minimum height=1cm, text centered, fill=mycolor1!50}, %blue!10
        problemNarrow/.style={rectangle, rounded corners=4pt, draw, line width=1.6pt, text width=1.3cm, minimum height=1cm, text centered, fill=mycolor1!50}, %blue!10
        problemMedium/.style={rectangle, rounded corners=4pt, draw, line width=1.6pt, text width=1.7cm, minimum height=1cm, text centered, fill=mycolor1!50}, %blue!10
        hv path/.style = {to path={-| (\tikztotarget) \tikztonodes}},
        vh path/.style = {to path={|- (\tikztotarget) \tikztonodes}},
        every path/.append style={line width=2pt},
        %>= angle 60
}
\begin{tikzpicture}
%       [
%       node distance = 0.5cm,
%       start chain = 1 going below,
%       every join/.style=arrow;
%       ]
\matrix[row sep = 5mm, column sep = 8.5mm]{
        \node (1in3) [problemMedium] {\oneinthree};
&
        \node (3DM) [problemMedium] {\threedm};
&
        \node (N4DM) [problem] {\ndm4};
&
        \node (N3DM) [problem] {\ndm3};
&
        \node (LO) [problemNarrow] {\lengthoffsets};
&
        \node (PP) [problemNarrow] {\pathpuzzle};
\\
};
\def\stack#1#2{\vbox{\setbox0=\hbox{#1}\copy0\hbox to \wd0{\hfil #2\hfil}\vskip 4pt}}
\def\thmref#1{\textcolor{darkgray}{\stack{Thm}{\ref{#1}}}}
\graph[use existing nodes]{
  1in3 ->["\thmref{thm:3dm}"]
  3DM ->["\thmref{thm:n4dm}"]
  N4DM ->["\thmref{thm:n3dm}"]
  N3DM ->["\thmref{thm:length-offsets}"]
  LO ->["\thmref{thm:lo-to-pp}"]
  PP};
\end{tikzpicture}
\caption{The chain of reductions used in our proof.}
\label{fig:chain}
\end{figure*}

\paragraph{Fonts.}
To further communicate the challenge of path puzzles to the general public,
we designed a mathematical puzzle typeface
(as part of a series\footnote{See \url{http://erikdemaine.org/fonts}}).
Figure~\ref{fig:font-unsolved} gives the puzzle font, which has
one path puzzle for each letter of the alphabet.
Their solutions are designed to look like the 26 letters of the alphabet,
and are verified unique by exhaustive search.
Look ahead to the solved font in Figure~\ref{fig:font-solved} in 
Appendix~\ref{app:font-sol} when you no longer want to solve the puzzles.

\begin{figure}
  \centering
  %\tabcolsep=.002857\linewidth
  %\tabcolsep=.0025\linewidth
  \includegraphics[width=.13\linewidth]{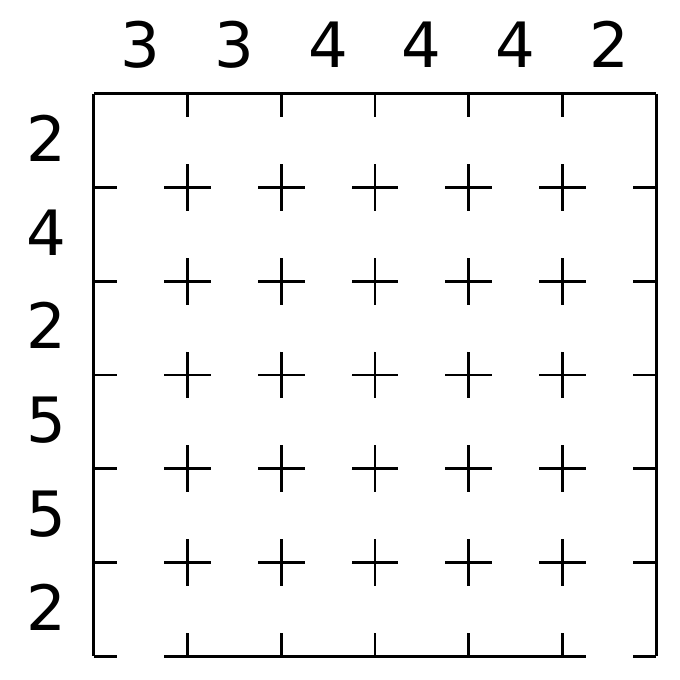}
  \includegraphics[width=.13\linewidth]{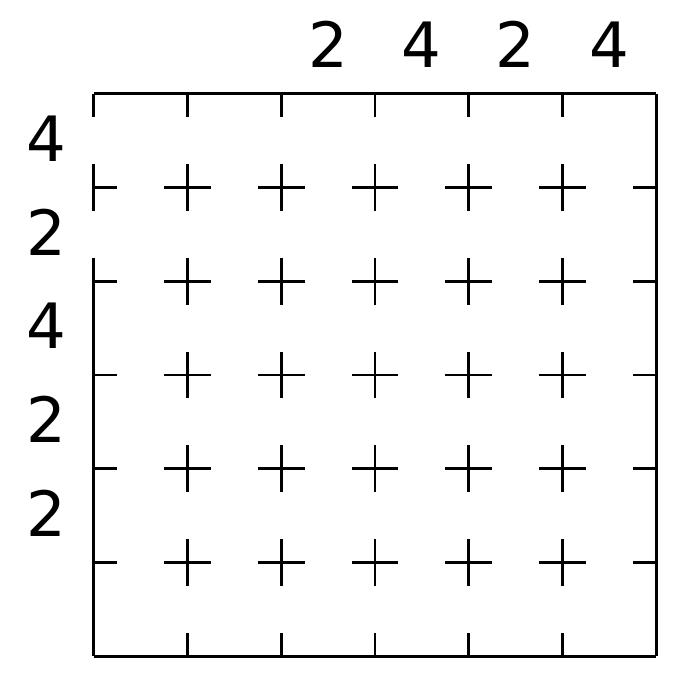}
  \includegraphics[width=.13\linewidth]{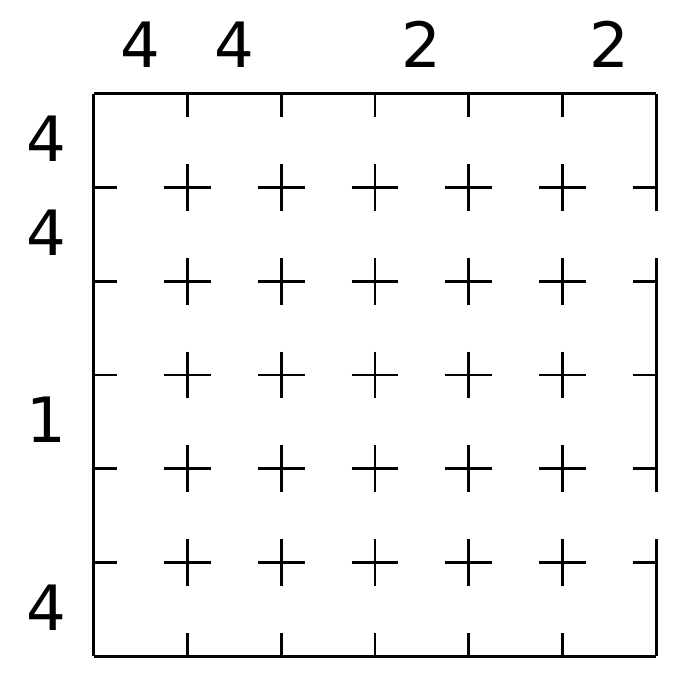}
  \includegraphics[width=.13\linewidth]{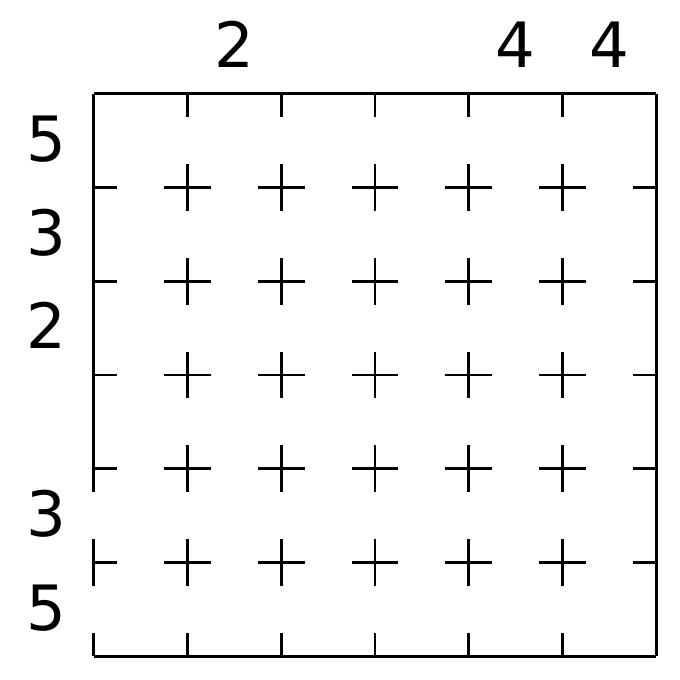}
  \includegraphics[width=.13\linewidth]{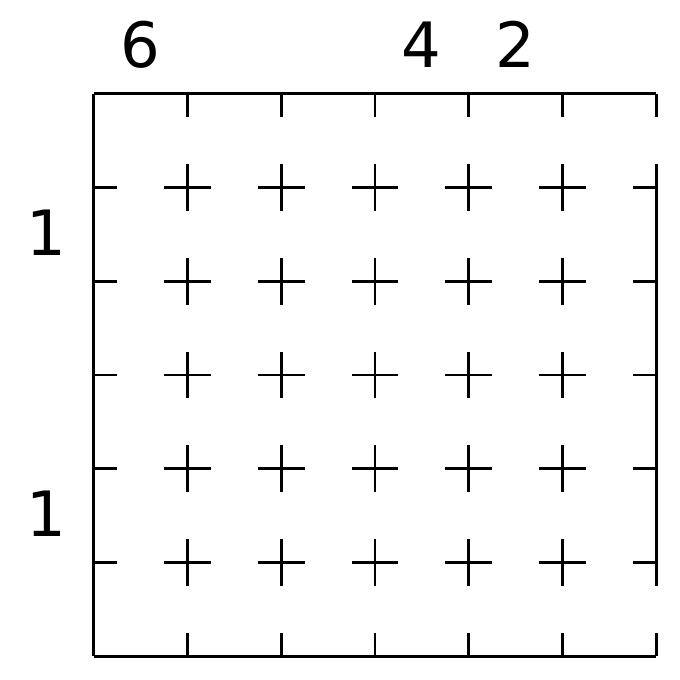}
  \includegraphics[width=.13\linewidth]{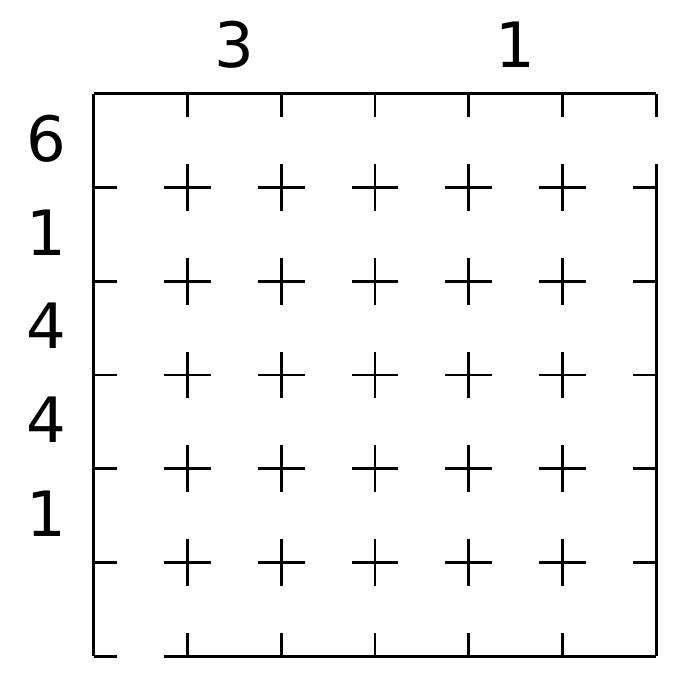}
  \includegraphics[width=.13\linewidth]{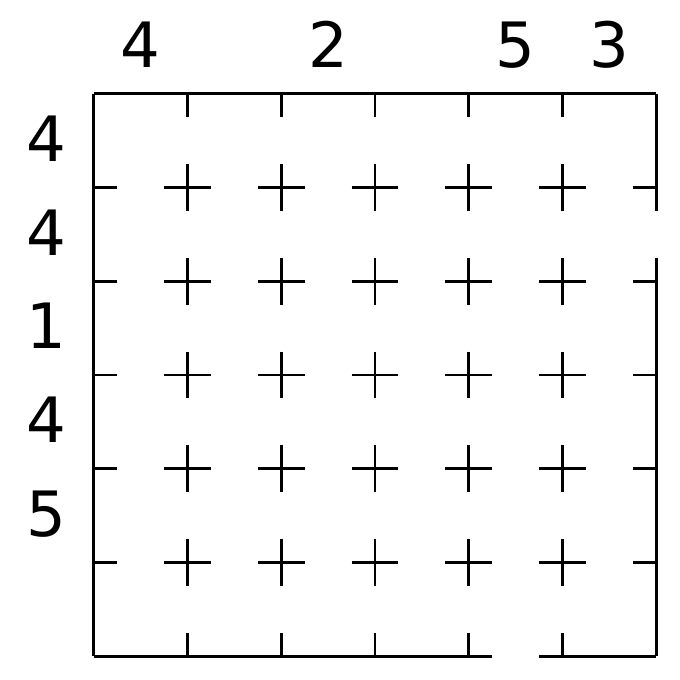}

  \includegraphics[width=.13\linewidth]{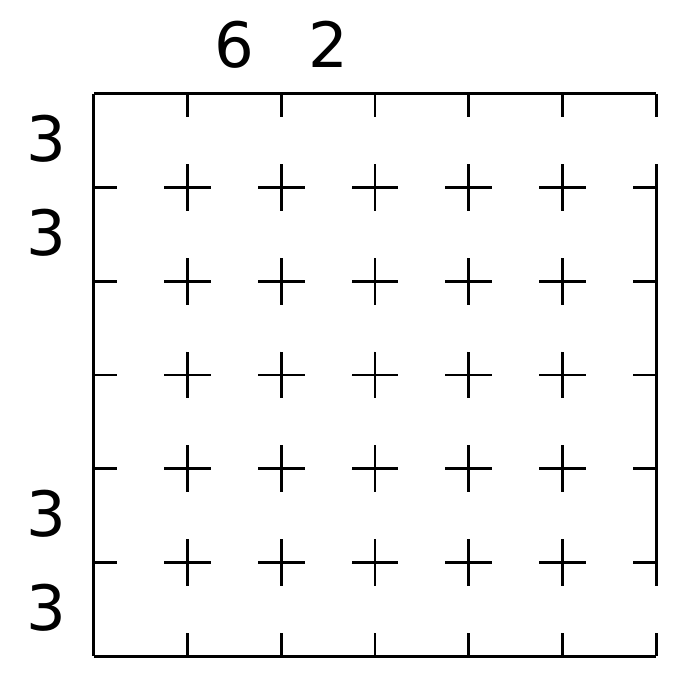}
  \includegraphics[width=.13\linewidth]{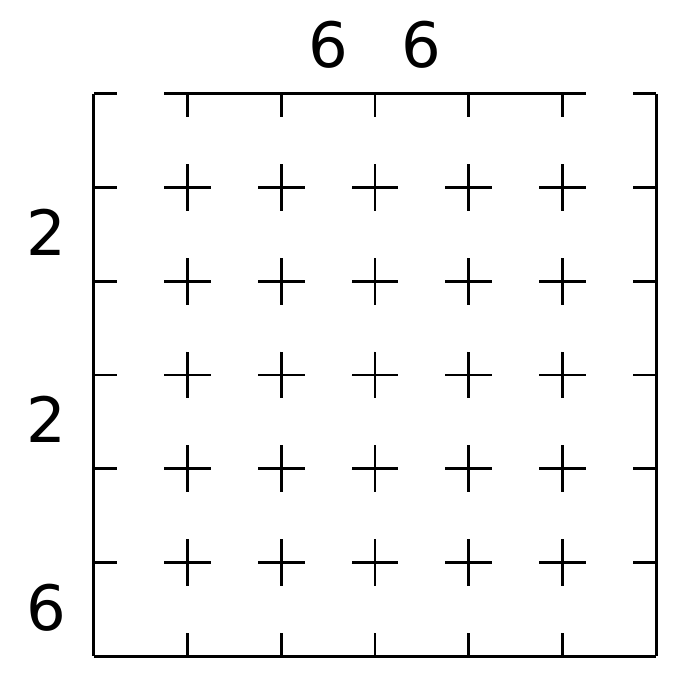}
  \includegraphics[width=.13\linewidth]{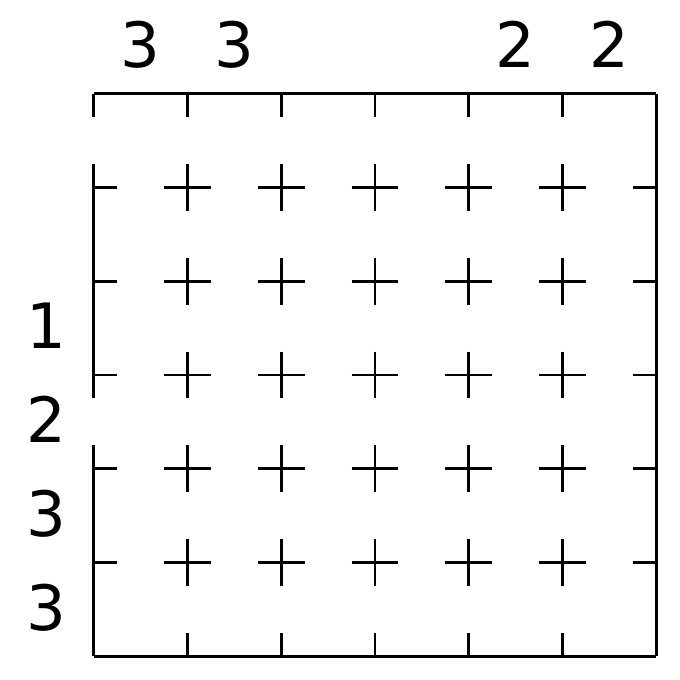}
  \includegraphics[width=.13\linewidth]{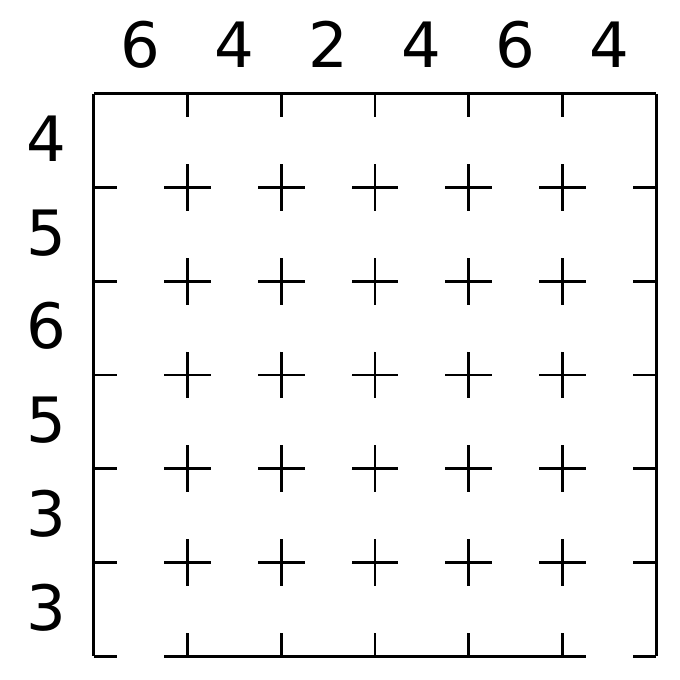}
  \includegraphics[width=.13\linewidth]{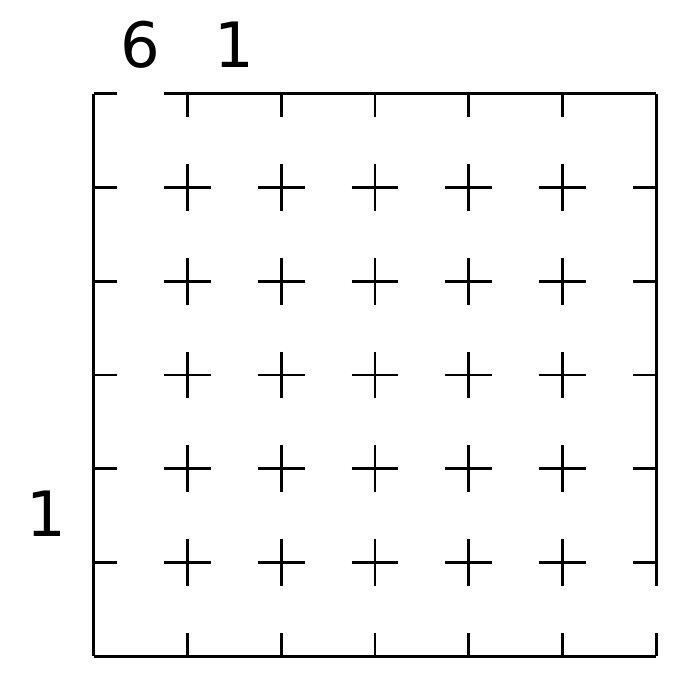}
  \includegraphics[width=.13\linewidth]{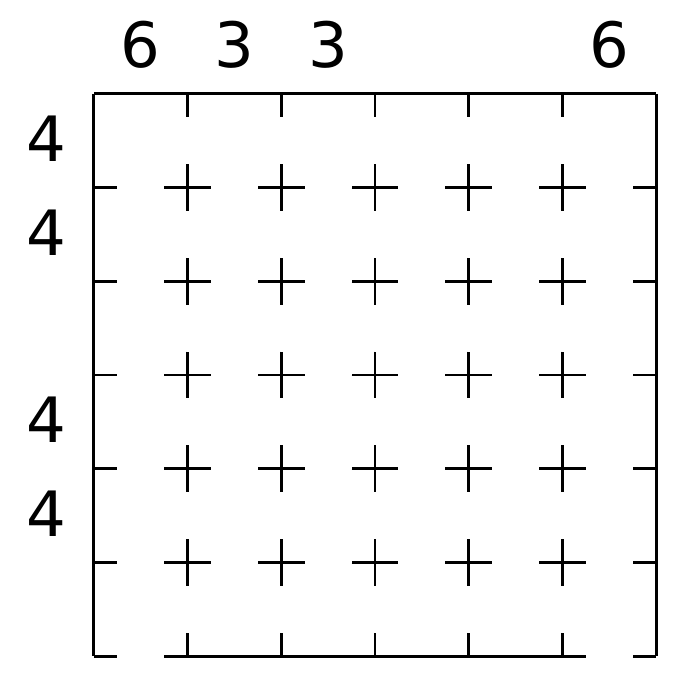}
  \includegraphics[width=.13\linewidth]{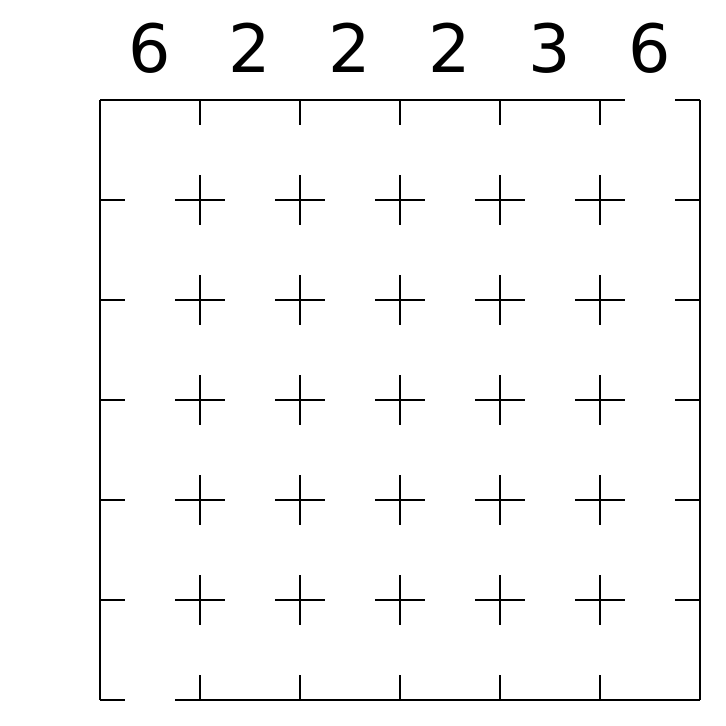}

  \includegraphics[width=.13\linewidth]{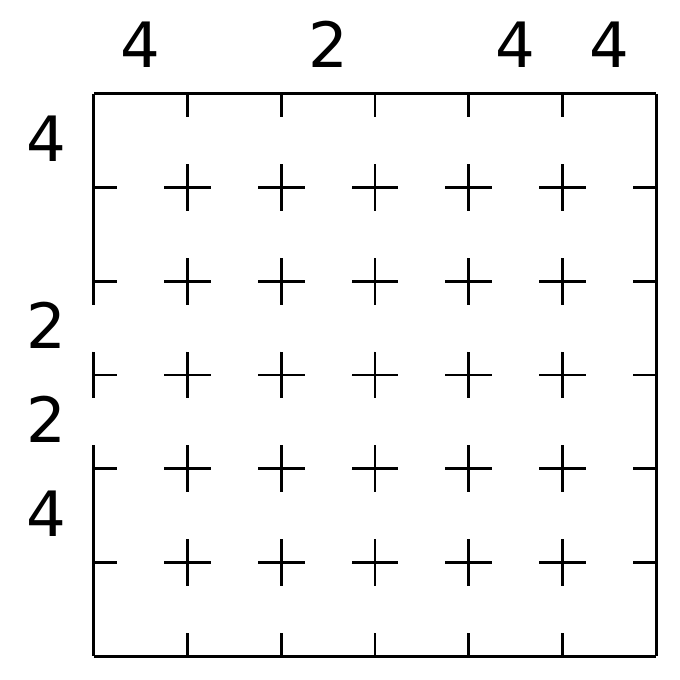}
  \includegraphics[width=.13\linewidth]{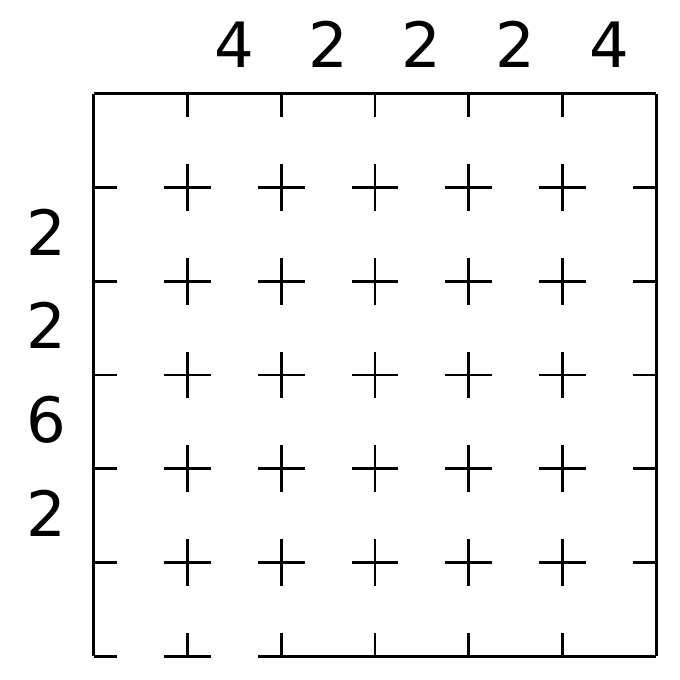}
  \includegraphics[width=.13\linewidth]{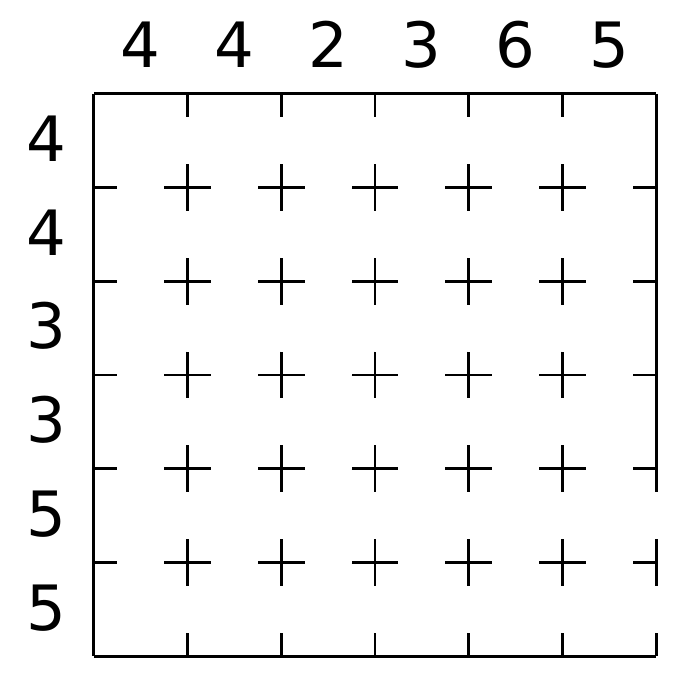}
  \includegraphics[width=.13\linewidth]{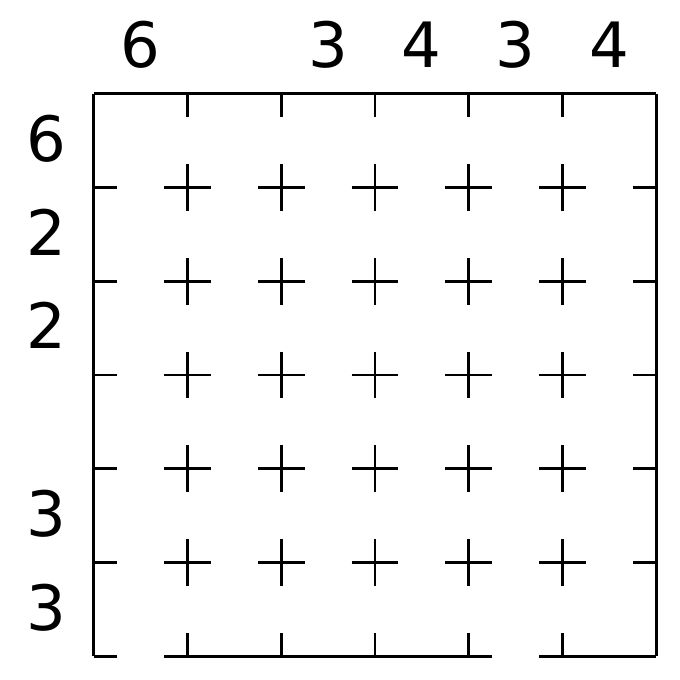}
  \includegraphics[width=.13\linewidth]{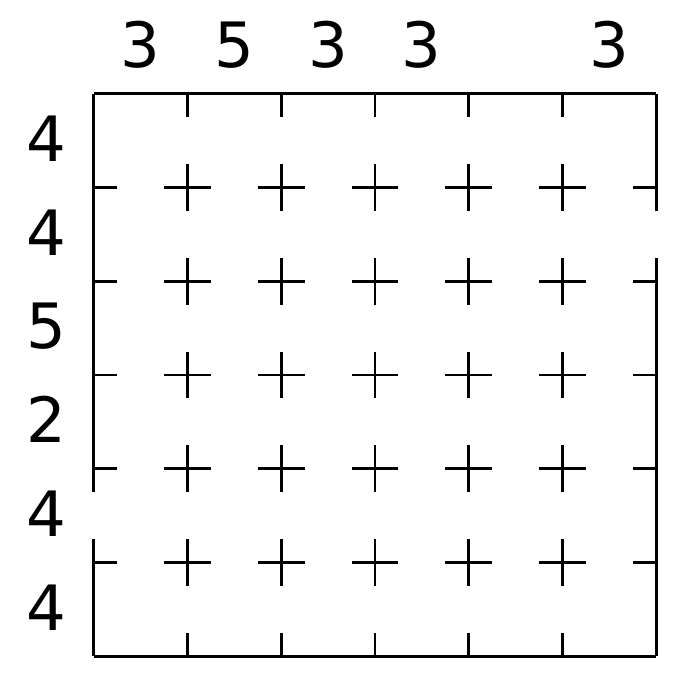}
  \includegraphics[width=.13\linewidth]{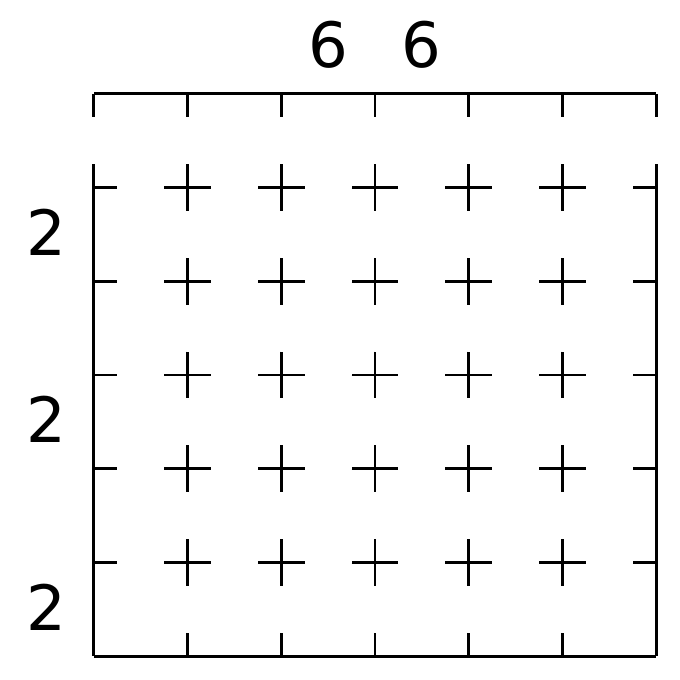}
  \hspace{.13\linewidth}

  \includegraphics[width=.13\linewidth]{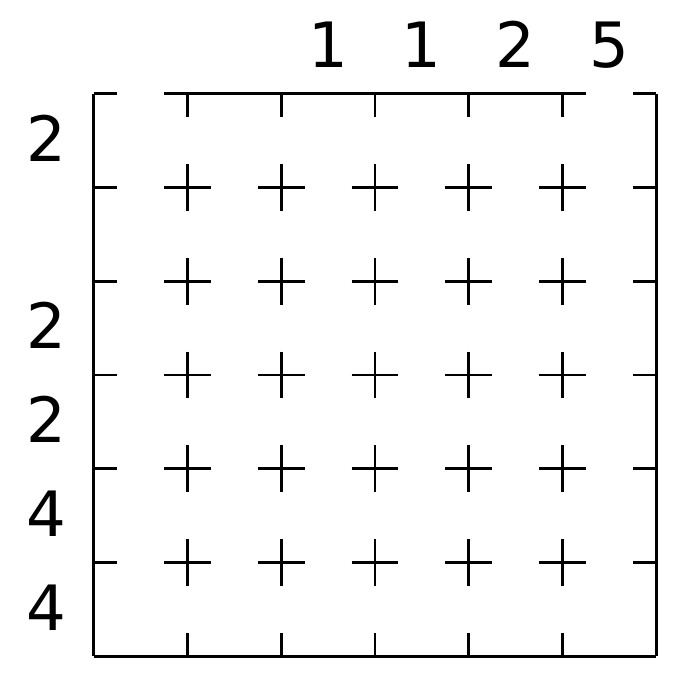}
  \includegraphics[width=.13\linewidth]{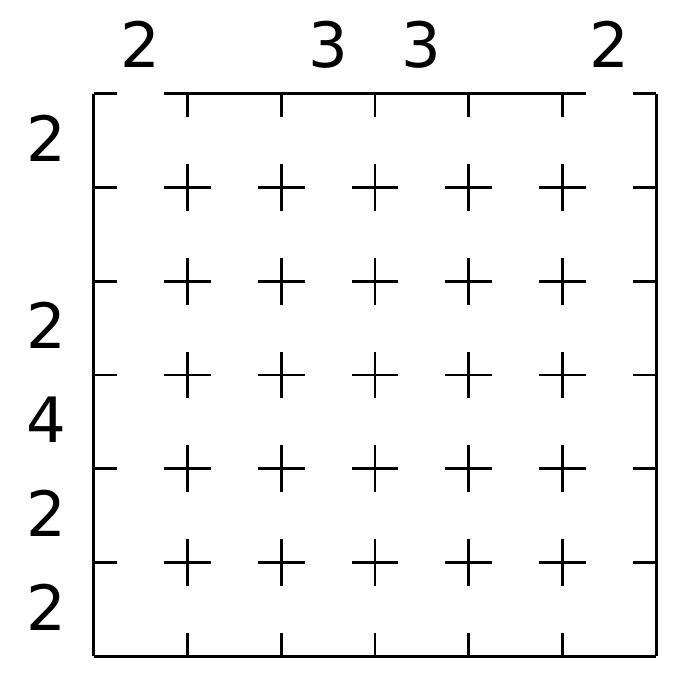}
  \includegraphics[width=.13\linewidth]{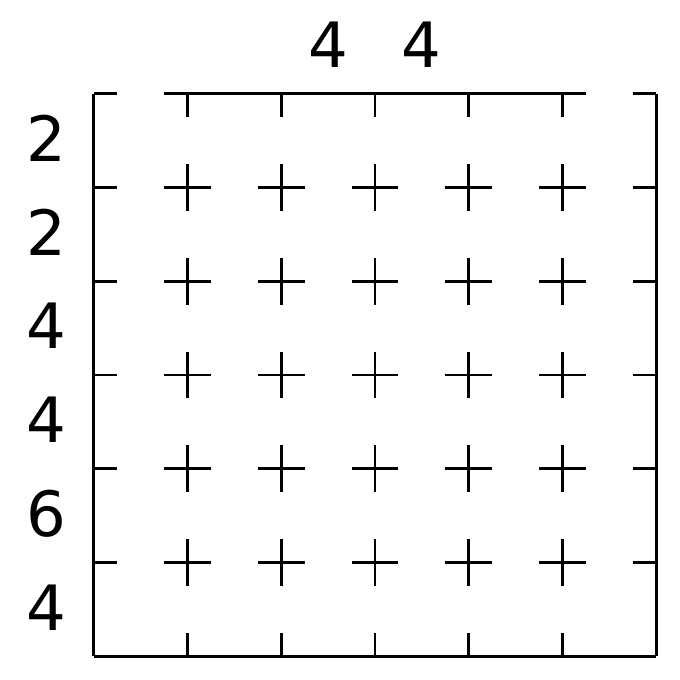}
  \includegraphics[width=.13\linewidth]{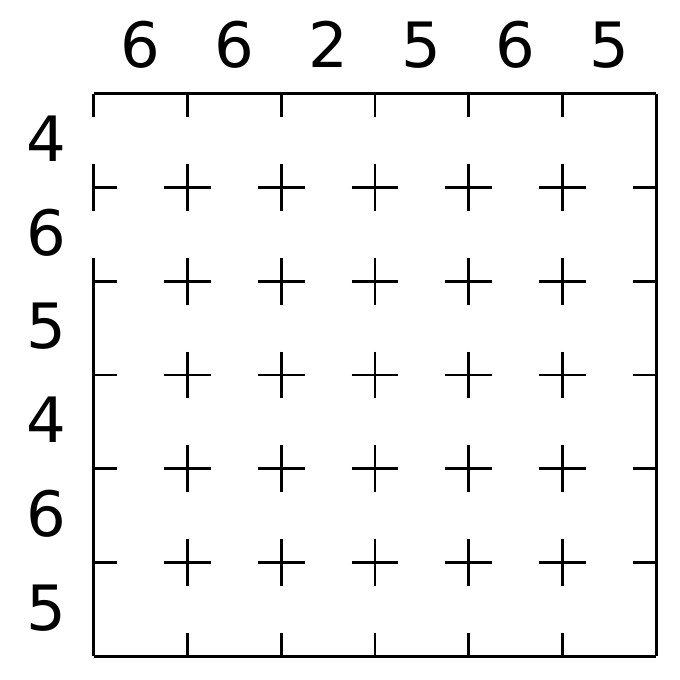}
  \includegraphics[width=.13\linewidth]{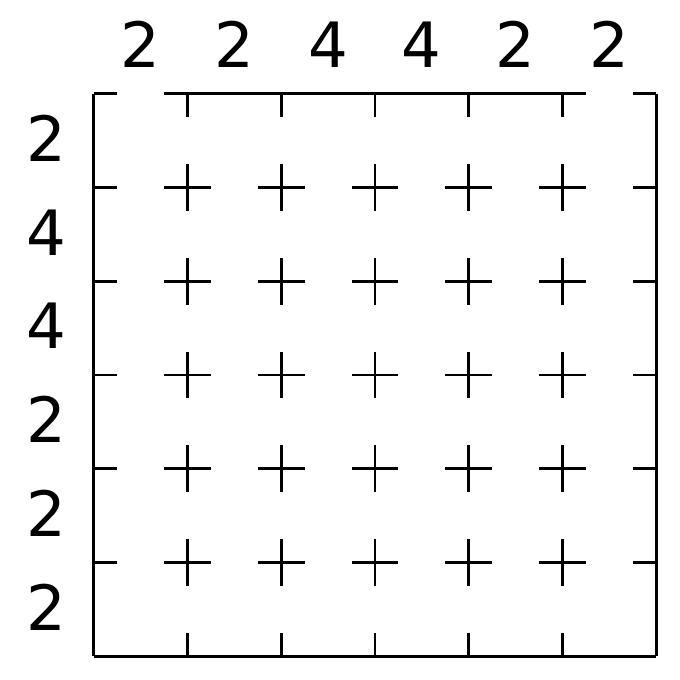}
  \includegraphics[width=.13\linewidth]{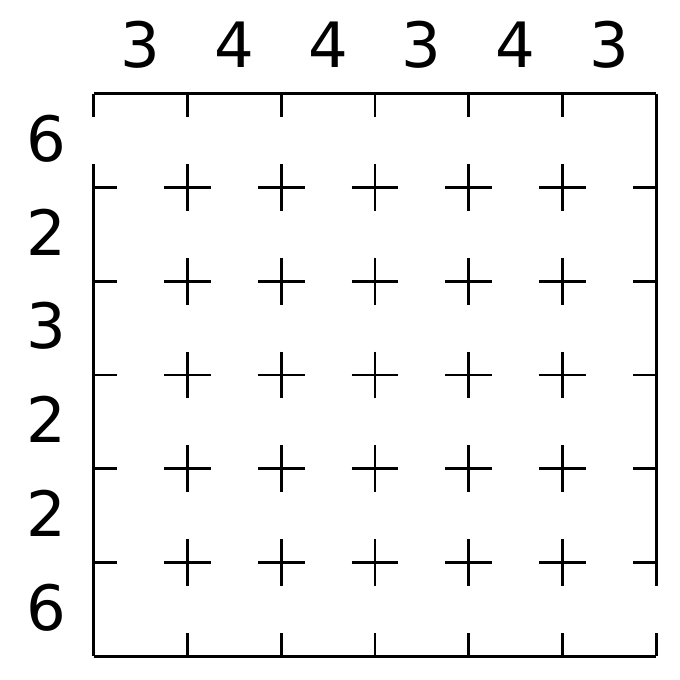}
  \hspace{.13\linewidth}
  \caption{Puzzle font}
  \label{fig:font-unsolved}
\end{figure}

\section{Numerical 3DM is ASP-Complete and \#P-Complete}
\label{ndm}
The goal of this section is to prove that \ndm3 is \emph{strongly}
ASP- and \#P-complete, i.e., ASP- and \#P-complete even when the $n$ numbers
are bounded by a polynomial in~$n$.
We follow a similar chain of reductions by Garey and Johnson
\cite{Garey-Johnson-1979}, namely 3SAT $\to$ \threedm\ $\to$ \partition4
$\to$ \partition3, but replacing \partition{k} with \ndm{k}
and starting from a different version of SAT:

\begin{problem}[\problemfont\oneinthree]
Given a 3CNF formula $C$ with only positive literals, is there an assignment of variables such that exactly one literal in each clause of $C$ is true?
%\label{def:1in3}
\end{problem}

\begin{lemma}
\label{lem:1in3}
\oneinthree{} is ASP-hard and \#P-hard.
\end{lemma}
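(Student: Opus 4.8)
The plan is to obtain ASP-hardness and \#P-hardness together, since both follow from a single \emph{parsimonious} reduction from $\threesat$: such a reduction transfers the \#P-hardness of $\#\threesat$ and the ASP-hardness of $\threesat$ (both classically complete for their respective classes). In fact this lemma is already recorded in the literature --- ASP-hardness of \oneinthree{} is shown in~\cite{Seta02thecomplexities} and \#P-hardness in~\cite{Hunt-Marathe-Radhakrishnan-Stearns-1998} --- so the shortest proof is simply to cite these; below I sketch a self-contained route that makes the parsimony explicit, since parsimony is what the rest of the paper relies on.

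I would factor the reduction as $\threesat \to \nonposoneinthree \to \oneinthree$. For the first step, replace each clause $(\ell_1 \vee \ell_2 \vee \ell_3)$ of the $\threesat$ instance by Schaefer's fixed-size $1$-in-$3$ gadget: a constant number of $1$-in-$3$ clauses on $\ell_1, \ell_2, \ell_3$ and a constant number of fresh auxiliary variables (private to that clause), arranged so that every assignment of $\ell_1, \ell_2, \ell_3$ satisfying the original clause extends to \emph{exactly one} assignment of the auxiliaries satisfying the gadget, and no extension exists when the clause is violated. Because the auxiliaries of different clauses are disjoint, a satisfying assignment of the $\nonposoneinthree$ instance restricts to a satisfying assignment of $\threesat$ and vice versa, with a unique extension each way --- a bijection on solutions. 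For the second step, eliminate negations: add one global variable $f$ constrained (by a small gadget with a unique completion) to be false, introduce for each negated variable $x$ a fresh ``complement'' $x'$ with the clause $(x, x', f)$ forcing $x' = \neg x$, and rewrite each $\bar x$ as $x'$; the values of $f$ and of every $x'$ are then forced, so solutions again correspond one-to-one. Composing gives a parsimonious $\threesat \to \oneinthree$ reduction, whence the lemma.

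The main obstacle is parsimony rather than NP-hardness: one must choose the $1$-in-$3$ clause gadget and the ``force-to-false'' gadget so that each legal partial assignment has a \emph{unique} completion and no spurious solutions are introduced --- the off-the-shelf NP-hardness gadgets are not all parsimonious as usually stated, and forcing a single Boolean constant using distinct-variable $1$-in-$3$ clauses requires a little care. Verifying this uniqueness, gadget by gadget, is the bulk of the work, and it is the reason one might prefer to invoke~\cite{Seta02thecomplexities,Hunt-Marathe-Radhakrishnan-Stearns-1998} directly.
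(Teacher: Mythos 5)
Your proposal takes essentially the same approach as the paper: both establish the lemma by combining the known \#P-hardness and ASP-hardness of \threesat{} with a parsimonious reduction from \threesat{} to \oneinthree{} (the paper simply cites Theorem~3.8 of Hunt et al.\ for that reduction rather than constructing gadgets, and your sketched gadgets are plausibly made parsimonious with the care you acknowledge is needed). One caution: your opening claim that ASP-hardness of \oneinthree{} is ``already recorded'' in Seta et al.\ is exactly the pitfall the paper footnotes --- their Section~3.2.4 proves ASP-hardness only for 1-in-3-SAT with negative literals allowed, not the positive version --- but since your explicit negation-elimination step supplies the missing monotone reduction, the argument still goes through.
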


\begin{proof}
\textsc{3SAT} is shown to be \#P-hard in~\cite{valiant1979}.
Section~3.2.1 of~\cite{Seta02thecomplexities} shows that
\textsc{3SAT} is ASP-hard.%
\footnote{Section 3.2.4 of~\cite{Seta02thecomplexities} proves that
  \nonposoneinthree\ is ASP-hard.  Unfortunately, their problem definition
  allows negative clauses, while we need \oneinthree.}
Theorem~3.8 of~\cite{Hunt-Marathe-Radhakrishnan-Stearns-1998}
gives a parsimonious reduction from \textsc{3SAT} to \oneinthree{}.%
\footnote{In \cite{Hunt-Marathe-Radhakrishnan-Stearns-1998},
  \oneinthree\ is called ``\textsc{1-Ex3MonoSat}''.}
Combining these results gives the claim.
\end{proof}

%The claim follows from combining Theorem~3.8 of~\cite{Hunt-Marathe-Radhakrishnan-Stearns-1998} with Section~3.2.1 of~\cite{Seta02thecomplexities}.%TODO citation
%\end{proof}

%\begin{defn}
%\label{def:1in3}
%\oneinthree{} is \nonposoneinthree{} restricted to instances with no negative literals.
%\end{defn}
%
%\begin{lemma}
%\label{lem:1in3}
%\oneinthree is ASP-hard and \#P-hard.
%\end{lemma}
%\begin{proof}
%We use a simple reduction from \nonposoneinthree{}. We introduce two special variables $P,N$ (for positive, negative respectively) and a clause $P \vee N \vee N$. This clauses forces $P$ to be assigned $1$ and $N$ to be assigned $0$ in order to be satisfied. Now, for every variable $x_i$, we can introduce a new variable $x_i^{-}$ and a new clause $x_i \vee x_i^{-} \vee N$. Since $N$ must be assigned $0$, exactly one of $x_i, x_i^{-}$ must be assigned $1$ and the other $0$. Therefore, $x_i^{-} = \bar{x}_i$ by value, and so we can replace every instance of $\bar{x}_i$ with $x_i^{-}$. None of the newly introduced clauses have any negative literals and the reduction is clearly parsimonious as no new solutions are introduced, so the reduction is complete.
%\end{proof}

\begin{problem}[\problemfont\threedm]
Given three sets $X, Y, Z$ of equal cardinality and a set $T$ of triples $(x,y,z)$ where $x\in X, y\in Y, z\in Z$, is there a set $S \subseteq T$ such that each element of $X, Y, Z$ appears in exactly one triple in~$S$?
\end{problem}

\begin{theorem} \label{thm:3dm}
\threedm{} is ASP-hard and \#P-hard, even when $T$ is constrained not to contain any two triples agreeing on more than one coordinate.
\end{theorem}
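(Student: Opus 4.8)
The plan is to give a polynomial-time \emph{parsimonious} reduction from \oneinthree{} to \threedm{} that also respects the extra restriction on~$T$; since parsimonious reductions preserve both \#P-hardness and ASP-hardness, Lemma~\ref{lem:1in3} then yields the theorem. Following Garey and Johnson, the produced instance has one \emph{variable gadget} per variable and one \emph{clause gadget} per clause. The key simplification from starting at the 1-in-3 version rather than plain 3SAT is that it lets us drop the classical ``garbage-collection'' component entirely --- and that component is exactly what makes the textbook reduction non-parsimonious.

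For a variable $u$ occurring in $n_u\ge 2$ clauses I would use the classical ring: internal elements $a_1,\dots,a_{n_u}\in Y$ and $\bar a_1,\dots,\bar a_{n_u}\in Z$, two $X$-elements per occurrence --- a true-tip $t_j$ and a false-tip $f_j$ --- and the $2n_u$ triples $(f_j,a_j,\bar a_j)$ and $(t_j,a_{(j\bmod n_u)+1},\bar a_j)$. The internal elements form an even cycle, so the gadget has exactly two ways to cover them: one uses all triples of the first kind, consuming every $f_j$ and freeing every $t_j$ (read as ``$u$ is true''); the other uses all triples of the second kind, consuming every $t_j$ and freeing every $f_j$ (``$u$ is false''). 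A free true-tip of occurrence $j$ can then only be picked up by the gadget of the clause containing that occurrence, and likewise for free false-tips. A variable occurring only once needs a slightly different but equally small gadget, or can be removed by standard preprocessing of the \oneinthree{} instance.

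For a clause $c$ with (positive) literals at positions $1,2,3$, I would build a clause gadget on six internal elements $p_1,p_2,p_3\in Y$ and $q_1,q_2,q_3\in Z$ with nine triples, one lying in each cell $(p_i,q_j)$ of the $3\times 3$ grid, whose $X$-coordinate is the true-tip of one occurrence or a false-tip of one occurrence. Choosing the $X$-coordinates by a Latin-square-style layout, exactly three of the $3!$ pairings of the $p_i$ with the $q_j$ can be realized using three \emph{distinct} tips, giving three ``configurations''; configuration $k$ uses the true-tip of occurrence $k$ and the false-tips of the other two occurrences, hence is realizable exactly when literal $k$ is true and the other two are false. Thus the clause gadget is coverable iff exactly one of its literals is true, and then its internal matching is forced. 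The crucial bookkeeping is that each variable gadget frees $n_u$ tips while each clause gadget absorbs three tips (one true, two false), and $\sum_u n_u = 3m$ for $m$ clauses, so in \emph{any} global perfect matching every freed tip is consumed by a clause gadget with nothing left over --- no garbage component is required. After padding with a few triples each using three otherwise-unused elements (hence forced into every matching) to equalize $|X|=|Y|=|Z|$, this is a legal \threedm{} instance.

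It remains to check the bijection and the side condition. Given a \oneinthree{}-satisfying assignment, each variable gadget is forced into the matching corresponding to its value and each clause gadget into the configuration selecting its unique true literal; these choices are mutually consistent and cover every element exactly once, so they assemble into a perfect matching. Conversely, any perfect matching picks one of the two states of each variable ring (defining an assignment), which then forces every clause configuration, and global consistency forces each clause to have exactly one true literal; so the correspondence is one-to-one. For the restriction on~$T$: triples from different gadgets share at most one element (a tip, in $X$); the ring triples of a variable gadget with $n_u\ge 2$ pairwise disagree in at least two coordinates by direct inspection; and the nine triples of a clause gadget can be laid out so that each of the three coordinate projections is injective (in particular the two triples carrying a given false-tip differ in both their $Y$- and $Z$-coordinates). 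I expect the main difficulty to be precisely this balancing act in the clause gadget --- forcing exactly three realizable configurations (so that both the ``exactly-one'' semantics and parsimony hold) while keeping all three coordinate projections injective (for the side condition) --- together with the nuisance of variables of very low occurrence count.
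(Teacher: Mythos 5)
Your reduction is genuinely different from the paper's, and its skeleton is sound. The paper also starts from \oneinthree{} via Lemma~\ref{lem:1in3}, but it keeps the clause gadget trivial---each clause merely \emph{identifies} one positive vertex from each of its three literal occurrences into a single element, which is then covered exactly once iff exactly one literal is true---and it does the garbage collection \emph{inside} each variable gadget (extra ``positive'' triples $(\bar x_{3i},\bar x_{3i+1},\bar x_{3i+2})$ mop up the negative vertices in the true state). To make that work the paper must first triplicate every clause so that each occurrence count is divisible by $3$ and a consistent $3$-coloring of the elements exists. You go the other way: a doubled variable gadget with a true-tip and a false-tip per occurrence, and a nontrivial $3\times 3$ clause gadget, which lets you skip both the triplication and the coloring bookkeeping. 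Your accounting ($\sum_u n_u = 3m$, each clause absorbing one true-tip and two false-tips) is correct, the bijection argument is right, and the padding you mention is unnecessary since $|X|=|Y|=|Z|=6m$ automatically. Both routes are parsimonious.

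The one genuine gap is exactly where you predicted it: the clause gadget is asserted, not constructed, and its existence is not routine. In particular the most natural layout---true-tips on the diagonal cells $(i,i)$---provably cannot satisfy the side condition: a short case analysis shows the two cells carrying some false-tip are then forced into a common row or column, i.e., two triples agreeing on two coordinates. A layout that does work puts all three true-tips in one row:
\[
\begin{array}{c|ccc}
 & q_1 & q_2 & q_3 \\ \hline
p_1 & T_1 & T_2 & T_3 \\
p_2 & F_3 & F_1 & F_2 \\
p_3 & F_2 & F_3 & F_1
\end{array}
\]
where rows are the $p_i$, columns are the $q_j$, and the entry of each cell is the $X$-coordinate of its triple. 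Every row and every column contains three distinct tips and the two cells of each $F_j$ are non-attacking, so no two triples agree on two coordinates; of the six permutation matrices, three repeat a false-tip (hence can never be jointly selected) and the other three carry exactly the sets $\{T_k\}\cup\{F_j : j\ne k\}$, giving exactly one selectable configuration for each choice of true literal and none when zero, two, or three literals are true. With this table supplied---plus the easy parsimonious fix of duplicating every clause once so that every variable occurs at least twice (a single-occurrence variable cannot simply be ``removed,'' and with $n_u=1$ your two ring triples would agree on both internal coordinates)---your proof goes through.
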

\begin{proof}
We give a parsimonious reduction from \oneinthree{},
using the variable gadget from Garey and Johnson's reduction
from \threesat{} to \threedm{} \cite[Thm.~3.2, p.~50]{Garey-Johnson-1979}.
Given a \oneinthree{} instance with a set $V$ of variables and $C$ of clauses,
we construct the corresponding \threedm{} instance as follows. 
We will represent the \threedm{} instance as a hypergraph that is tripartite
and 3-uniform, i.e., in which each edge connects exactly three vertices of
different colors according to a 3-coloring of the vertices.
We will say that vertices colored $0$ belong to~$X$,
vertices colored $1$ belong to~$Y$,
and vertices colored $2$ belong to~$Z$.

\paragraph{Clause triplication.}
First we triplicate each clause, producing the multiset
$C' = C \sqcup C \sqcup C$ (the disjoint union of three copies of~$C$).
As a result, the number $n_x$ of occurrences of each variable $x \in V$
in clauses in $C'$ (multiply counting if $x$ occurs multiple times in the
same clause) is divisible by~$3$.
A truth assignment for $V$ satisfies $C'$ if and only if it satisfies $C$,
so this triplication does not affect correctness,
but it will help us obtain a 3-coloring.

\paragraph{Variable gadget.}
Next, for each variable $x \in V$, we create a \emph{variable gadget}
consisting of $4 n_x$ vertices associated with~$x$;
refer to Figure~\ref{fig:3dm}.
We call $n_x$ of the vertices \emph{positive} $x$ vertices,
denoted $x_0, x_1, \dots, x_{n_x-1}$ (one for each occurrence of $x$ in~$C'$);
we call $n_x$ of the vertices \emph{negative} $x$ vertices,
denoted $\bar x_0, \bar x_1, \dots, \bar x_{n_x-1}$;
and we call $2 n_x$ of them \emph{auxiliary} vertices,
denoted $x'_0, x'_1, \dots, x'_{2 n_x-1}$.
The edges covering the auxiliary vertices are as follows:
for each $i \in \{0, 1, \dots, n_x-1\}$,
we add the ``positive'' edge $(x_i, x'_{2i},  x'_{2i+1})$
and the ``negative'' edge $(\bar{x}_i,  x'_{2i+1}, x'_{(2i+2) \bmod 2 n_x})$.
No other edges cover the auxiliary vertices,
so there are only two ways to cover them: choose all the positive edges,
thereby covering all the positive vertices and none of the negative vertices,
or choose all the negative edges,
thereby covering all the negative vertices and none of the positive vertices.
The former choice represents assigning $x$ to be \textsc{true},
while the latter choice represents assigning $x$ to be \textsc{false}.

\begin{figure}
\centering
\begin{minipage}[t]{0.49\textwidth}
  \centering
  \includegraphics[scale=0.4]{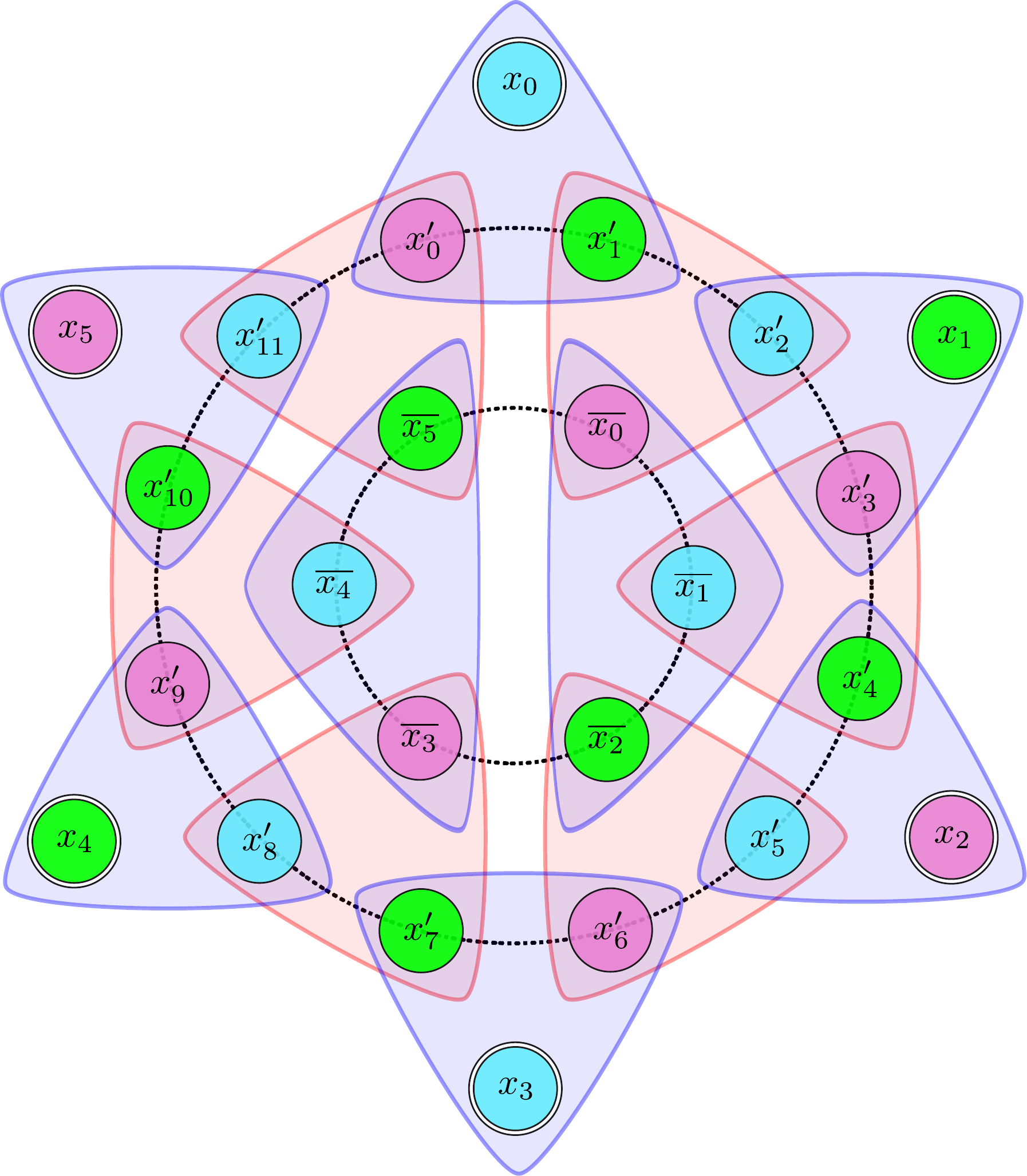}
  \caption{3DM variable gadget for variable $x$ occurring in $n_x=6$ clauses in $C'$ (two clauses in~$C$). Hyperedges are drawn as shaded triangles; any solution must include all the positive (blue) or all the negative (red) hyperedges. Vertex colors $0,1,2$ are drawn as magenta, green, and cyan. Only the positive vertices (drawn with doubled outlines) are attached to other gadgets.}
  \label{fig:3dm}
\end{minipage}\hfill
\begin{minipage}[t]{0.49\textwidth}
  \centering
  \hbox to \hsize{\hss\includegraphics[scale=0.4]{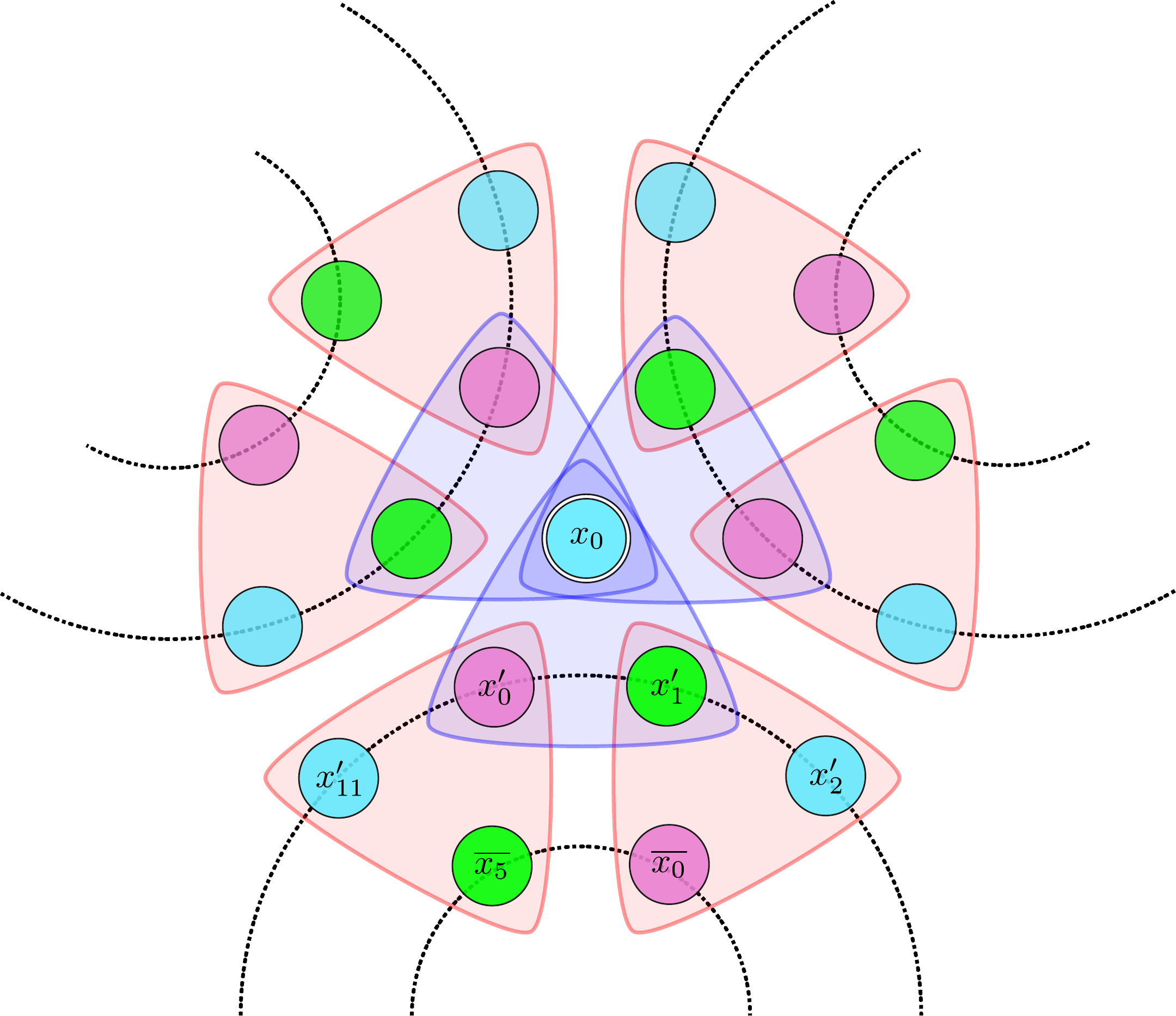}}
  \caption{3DM clause gadget, identifying cyan positive vertices from three variable gadgets (Figure~\ref{fig:3dm}). Although here we draw the three variables as distinct, we may also identify positive vertices from the same variable gadget (when the same variable appears twice in the same clause).}
  \label{fig:3dm-clause}
\end{minipage}
\end{figure}

Because $n_x$ is divisible by $3$, we can 3-color the vertices by assigning
colors $0,1,2,0,1,2,\dots$ to the auxiliary vertices $x'_0,x'_1,x'_2,\dots$,
then coloring each positive and negative vertex with the one color not used in
its edge, as shown in Figure~\ref{fig:3dm}. 

The final edges of the variable gadget serve to collect ``garbage''
negative vertices.
For each $i \in \{0, 1, \dots, n_x/3-1\}$
(using that $n_x$ is divisible by~$3$), we add another positive edge
$(\bar x_{3 i}, \bar x_{3 i+1}, \bar x_{3 i+2})$.
These positive edges overlap the negative edges, so cannot be chosen in the
\textsc{false} assignment, but do not overlap the positive edges,
and then they cover all the negative vertices.
No other edges cover the negative vertices, so again we must choose all
the positive edges or all the negative edges.

Therefore,
%choosing either all the positive edges or all the negative edges,
local to the variable gadget,
we cover all the auxiliary and negative $x$ vertices, and either all or none
of the positive $x$ vertices.  Only the positive $x$ vertices will interact
with other gadgets, through clause gadgets.

\paragraph{Clause gadget.}
Finally, for each clause $c = \langle x,y,z \rangle \in C'$, we
\emph{identify} one positive $x$ vertex, one positive $y$ vertex, and
one positive $z$ vertex all of the same color,
resulting in a single vertex covered by one positive edge
from each of the three corresponding vertex gadgets;
refer to Figure~\ref{fig:3dm-clause}.
The three identified vertices are chosen to be unique to this clause gadget,
so they will not be identified again, and thus will be covered exactly once
if and only if exactly one of the three variables is assigned \textsc{true}.

For each of the three copies of a clause in~$C$, we choose the identified
vertices to be a different color among $\{0,1,2\}$, so that each clause in $C$
consumes exactly one positive vertex of each color from each of the three
variable gadgets.
(When a variable appears twice in the same clause, two of these variable
gadgets will actually be the same, and we will end up consuming two positive
vertices of each color, but the accounting remains the same.)
Thus we will be able to use each positive vertex in each variable gadget
exactly once, without running out of any particular color.

\paragraph{Equivalence.}
Because the identified $(x,y,z)$ vertex in a clause gadget must be covered by
exactly one edge in the \threedm{} problem, exactly one of $x,y,z$ must have
an assignment of \textsc{true}, which is the \oneinthree\ constraint.
Thus, given a solved instance of \threedm{}, we can extract exactly one
solution to the original \oneinthree{} instance. 
Furthermore, given a solution to the \oneinthree{} instance, we can produce
exactly one solution to the \threedm{} instance by choosing all the positive
$x$ edges if $x$ is set to \textsc{true} and all the negative edges if $x$ is
set to \textsc{false}.
%All positive variable vertices will be covered by the constraints satisfied by the \oneinthree{} solution and we can cover all the negative variable vertices which haven't already been covered using garbage collection edges in exactly one way.
Thus the reduction is parsimonious.

Examining Figures~\ref{fig:3dm} and~\ref{fig:3dm-clause},
we also see that no hyperedge shares more than one vertex, as claimed.
\end{proof}

\begin{problem}[\problemfont\ndm{k}]
Given $k$ multisets of positive integers $X_1, \dots, X_k$ and a positive integer target sum $t$, does there exist a set $S \subseteq X_1\times \dots \times X_k$ of $k$-tuples such that, for each $(x_1, \dots, x_k) \in S$, $x_1 + \dots + x_k = t$, and each element of each $X_i$ appears as the $i$th coordinate in exactly one element of~$S$?  (Thus $|X_1| = \dots = |X_k| = |S|$, and we denote this common size by~$n$.)
\end{problem}

We will consider specially the cases $k=3$ and $k=4$
for which we label the sets $X,Y,Z$ and $W,X,Y,Z$ respectively.

\begin{theorem} \label{thm:n4dm}
  \ndm{4} is strongly ASP-hard and \#P-hard, even if $Y \cup (Y+Z)$
  (where $Y+Z=\{y+z : y\in Y, z\in Z\}$)
  is guaranteed to be a set (not a multiset). 
\end{theorem}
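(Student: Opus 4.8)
The plan is to give a parsimonious polynomial-time reduction from \threedm{} (Theorem~\ref{thm:3dm}), following the spirit of Garey and Johnson's reduction from \threedm{} to \partition{4} but producing \ndm{4} instances and preserving solution counts exactly. Fix a \threedm{} instance with base sets $P,Q,R$ of common size $q$ and triple set $T$, $|T|=k$; we may assume every element occurs in at least one triple, that $k\ge q$, and (by Theorem~\ref{thm:3dm}) that no two triples agree on more than one coordinate. A key fact that makes the construction run is that, for a fixed element $u\in P\cup Q\cup R$, the number of triples of $T\setminus M$ containing $u$ equals $\deg(u)-1$ for \emph{every} perfect matching $M$ --- it depends only on the instance, not the solution --- so garbage-collection machinery keyed to the degrees behaves uniformly across all solutions.

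All integers are written in a fixed base $r$ polynomial in $q$, using only $O(1)$ digit blocks, so the integers produced and the parameter $n$ of the \ndm{4} instance are polynomial in the input; this yields the strong form of the hardness. The $W$-coordinate holds one integer per triple $\tau=(\alpha,\beta,\gamma)\in T$, chosen so that it plus the ``real'' integers attached to $\alpha$, $\beta$, $\gamma$ sums to a global target $t$, together with a few ``cleanup'' integers. The $X$-, $Y$-, and $Z$-coordinates each hold $q$ real integers (one per base-set element) plus ``dummy'' integers, with the dummies attached to an element $u$ appearing in a multiplicity tied to $\deg(u)$. Using an auxiliary ``mode'' digit block and a small fixed additive offset in each of the three coordinates --- arranged so the three offsets cancel only when all three are present --- one forces that, in any solution, every $4$-tuple summing to $t$ is either a \emph{selection tuple}, consisting of some triple's $W$-integer together with the real integers of its three elements and read as ``this triple is in the matching'', or a \emph{garbage tuple} built only from dummy and cleanup integers.

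Parsimony then follows from an explicit bijection. Given a perfect matching $M$, output the selection tuples of the triples of $M$ together with the garbage tuples; the degree-tuned dummy multiplicities are exactly right so that the dummy integers, the $W$-integers of the triples not in $M$, and the cleanup integers pack into valid $4$-tuples in exactly one way. Conversely, in any \ndm{4} solution each real integer of the $X$-, $Y$-, $Z$-coordinate is used in exactly one tuple, which must be a selection tuple (a garbage tuple contains no real integer); hence the triples occurring in selection tuples cover each base-set element exactly once, i.e.\ form a perfect matching, and the remaining tuples are forced. A no-carry argument --- valid because $r$ is large enough that no column of the addition overflows --- shows that these are the only $4$-tuples summing to $t$, that distinct matchings give distinct solutions, and that every solution arises in this way.

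To obtain the further guarantee that $Y\cup(Y+Z)$ is a set, refine the digit layout: place the $Y$- and $Z$-data in disjoint blocks with every integer of the $Z$-coordinate exceeding the spread (maximum minus minimum) of the $Y$-coordinate, so that $y+z$ determines both $y$ and $z$ --- making $Y+Z$ repetition-free --- and no $Y$-integer equals any $y+z$; and give each dummy $Y$-integer a distinct serial number in its own digit block, inserting matching serial digits into the integers it is forced to be combined with so that these serials make $Y$ itself repetition-free without changing any $4$-tuple sum. The main obstacle is doing all of this at once: the mode digits and offsets forcing the selection/garbage dichotomy, the degree-tuned multiplicities that keep garbage collection solution-independent, the cleanup integers absorbing the dummies left over from matched triples, and the serial block forcing $Y\cup(Y+Z)$ to be a set each perturb the digit arithmetic, and reconciling them so that no carries occur and no spurious $4$-tuples or solutions appear --- that is, actually checking the bijection --- is where the real work lies.
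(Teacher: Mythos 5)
Your overall strategy matches the paper's: a digit-based, no-carry, parsimonious reduction from the restricted \threedm{} of Theorem~\ref{thm:3dm}, with degree-tuned extra elements (multiplicity $\deg(u)-1$) absorbing the unselected triples. But as written this is a plan, not a proof: the construction of the actual integers, the enumeration showing that only the intended quadruple types sum to $t$, and the verification of the set condition are all deferred, and you say yourself that ``actually checking the bijection is where the real work lies.'' Those checks are exactly where such reductions fail, so the proposal cannot be accepted on the strength of the architecture alone. (For comparison, the paper's construction needs a leading ``mode'' digit taking values $10,11,12,7$ against a target of $40$, and an explicit case analysis showing only three quadruple types survive; it also uses \emph{two} quadruples per selected triple, linked through a per-triple element of $Y'$, rather than your single selection tuple.)

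More seriously, the one place where you do commit to a mechanism --- the serial-number trick for making $Y$ a set --- does not work. Your $Y$-coordinate contains $\deg(\beta)-1$ dummies for each element $\beta$, and you propose giving them distinct serial digits, ``inserting matching serial digits into the integers it is forced to be combined with.'' But which integer a given dummy of $\beta$ is combined with is not determined by the instance: any of the $\deg(\beta)-1$ unselected triples containing $\beta$ may take any of its $\deg(\beta)-1$ dummies, and which triples are unselected varies across solutions. Since every $4$-tuple must sum exactly to $t$, a distinct serial digit on a dummy must be cancelled by a fixed partner, which either breaks the existence of garbage tuples or pins down the pairing and destroys both solution-independence and parsimony; making the serials interchangeable forces them to be equal, i.e., no serials at all. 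The same objection applies to the $Z$-dummies, which your ``spread'' argument for $Y+Z$ implicitly needs to be distinct as well. The paper avoids this entirely by routing all multiplicity-greater-than-one elements into $W'$ and $X'$, so that $Y'$ and $Z'$ consist only of canonically indexed, automatically distinct elements; distinctness of $Y'+Z'$ then follows from the guarantee of Theorem~\ref{thm:3dm} that no two triples agree on more than one coordinate --- a hypothesis your set-property argument never actually invokes. To repair your proposal you would need to redesign which coordinates carry the duplicated garbage-collection elements, not decorate them with serials.
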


\begin{proof}
We give a parsimonious reduction from \threedm{} where no two triples in
$T$ agree on more than one coordinate, as guaranteed by Theorem~\ref{thm:3dm}.
Our reduction loosely follows Garey and Johnson's original reduction \cite[Thm.~4.3, p.~97]{Garey-Johnson-1979} with extra care to ensure parsimony.

We are given a \threedm{} instance with elements partitioned into sets $$X=\{x_1,\dots,x_n\},Y = \{y_1, \dots, y_n\},Z = \{z_1, \dots, z_n\}$$ and a set of triples $T \subseteq X \times Y \times Z$.
Let $m_T(x_i)$ be the multiplicity of $x_i$ in $T$, that is, the number of triples of $T$ where $x_i$ is the first coordinate, and similarly define $m_T(y_j)$ and $m_T(z_k)$.

First we pick a large base $B = 100 n$.
% $B > |X|+|Y|+|Z|+|T|$.
We use the notation $(d_5,d_4,d_3,d_2,d_1,d_0)_B$ to represent the base-$B$ number equal to $\sum_{i=0}^5 d_iB^i = d_5B^5 + d_4B^4 + d_3B^3 + d_2B^2 + d_1B + d_0$. In the discussion that follows, we use that $B$ is large enough that addition with a digit of the base-$B$ representation of the numbers in question will never carry over to another digit, so $(d_5,d_4,d_3,d_2,d_1,d_0)_B + (d_5',d_4',d_3',d_2',d_1',d_0')_B = (d_5+d_5', d_4+d_4',d_3+d_3',d_2+d_2',d_1+d_1',d_0+d_0')_B$.

We construct a \ndm{4} instance with target $t = (40,0,0,0,0,0)_B$ and multisets $W',X',Y',Z'$ having the following elements (also refer to Table~\ref{tab:n4dm}):

\begin{enumerate}[(a)]
\item For each $x_i\in X$, place $(10,i,0,0,0,0)_B$ in $W'$, $(11,0,-i,0,0,0)_B$ in $Z'$, and $m_T(x_i)-1$ copies of $(10,i,-i,0,0,0)_B$ in $W'$.
\item For each $y_j\in Y$, place $(10,0,0,j,0,0)_B$ in $X'$, $(12,0,0,0,-j,0)_B$ in $W'$, and $m_T(y_j)-1$ copies of $(10,0,0,j,-j,0)_B$ in $X'$.
\item For each $z_k\in Z$, place $(10,0,0,0,0,k)_B$ in $Y'$ and $(7,0,0,0,0,-k)_B$ in $X'$.
\item For each $(x_i,y_j,z_k)\in T$, place $(10,-i,0,-j,0,-k)_B$ in $Z'$ and $(10,0,i,0,j,k)_B$ in $Y'$.
%\item Place $|T|-n$ copies of $(10,0,0,0,0,0)_B$ in $W'$ and another $|T|-n$ copies of $(10,0,0,0,0,0)_B$ in $X'$. 
\end{enumerate}

\begin{table}
\centering
\begin{tabular}{|c|c|c|c|c|}
\hline
& $x_i \in X$ & $y_j \in Y$ & $z_k \in Z$ & $(x_i,y_j,z_k) \in T$\\\hline
\multirow{2}{*}{$W'$} & $(10,i,0,0,0,0)_B$ & \multirow{2}{*}{$(12,0,0,0,-j,0)_B$}  &                                   & \\
& \boldmath $(10,i,-i,0,0,0)_B$    &                                       &                                   & \\\hline
\multirow{2}{*}{$X'$} &                    & $(10,0,0,j,0,0)_B$                    & \multirow{2}{*}{$(7,0,0,0,0,-k)_B$}& \\
&                                          & \boldmath $(10,0,0,j,-j,0)_B$ &                                   & \\\hline
$Y'$ &                                     &                                       & $(10,0,0,0,0,k)_B$                & $(10,0,i,0,j,k)_B$\\\hline
$Z'$ & $(11,0,-i,0,0,0)_B$                 &                                       &                                   & $(10,-i,0,-j,0,-k)_B$\\\hline
\end{tabular}
\caption{The constructions of $W',X',Y',Z'$. Each column represents the source of the constructed elements from the original \threedm{} instance.
Most elements have multiplicity~$1$; bold elements have multiplicity
one fewer than the corresponding \threedm{} source item.}
\label{tab:n4dm}
\end{table}

Importantly, the $x_i, y_j, z_k$ above are indexed starting at~$1$, not~$0$.
One can verify that the only quadruples summing to $t$ are the following
(given in $W',X',Y',Z'$ order):

\medskip

\newcounter{type}

\hbox to \textwidth{%
\arraycolsep=0pt
\begin{minipage}[t]{0.3\textwidth}
\refstepcounter{type}
\textbf{Type \thetype.}
For $(x_i,y_j,z_k) \in T$:
\label{form:pos}
$$
\begin{array}{llr@{\,}r@{\,}r@{\,}r@{\,}r@{\,}rl}
      &(&10,& i,&0,& 0,&0,& 0)_B\\
+\quad&(&10,& 0,&0,& j,&0,& 0)_B\\
+\quad&(&10,& 0,&0,& 0,&0,& k)_B\\
+\quad&(&10,&-i,&0,&-j,&0,&-k)_B\\
=\quad&(&40,& 0,&0,& 0,&0,& 0)_B
\end{array}
$$
\end{minipage}
\hss
\begin{minipage}[t]{0.3\textwidth}
\refstepcounter{type}
\textbf{Type \thetype.}
For $(x_i,y_j,z_k) \in T$:
\label{form:neg}
$$
\begin{array}{llr@{\,}r@{\,}r@{\,}r@{\,}r@{\,}rl}
      &(&12,&0,& 0,&0,&-j,& 0)_B\\
+\quad&(& 7,&0,& 0,&0,& 0,&-k)_B\\
+\quad&(&10,&0,& i,&0,& j,& k)_B\\
+\quad&(&11,&0,&-i,&0,& 0,& 0)_B\\
=\quad&(&40,&0,& 0,&0,& 0,& 0)_B\\
\end{array}
$$
\end{minipage}
\hss
\begin{minipage}[t]{0.35\textwidth}
\refstepcounter{type}
\textbf{Type \thetype.}
For $(x_i,y_j,z_k) \in T$:
\label{form:cleanup}
$$
\begin{array}{llr@{\,}r@{\,}r@{\,}r@{\,}r@{\,}rl}
      &(&10,& i,&-i,& 0,& 0,& 0)_B\\
+\quad&(&10,& 0,& 0,& j,&-j,& 0)_B\\
+\quad&(&10,& 0,& i,& 0,& j,& k)_B\\
+\quad&(&10,&-i,& 0,&-j,& 0,&-k)_B\\
=\quad&(&40,& 0,& 0,& 0,& 0,& 0)_B
\end{array}
$$
\end{minipage}%
}

\bigskip

Furthermore, there is a one-to-one correspondence between solutions to the source instance and solutions to the constructed instance by choosing a triple to be in the solution $S \subseteq T$ of the \threedm{} instance if and only if both the corresponding triples of the types~\ref{form:pos} and~\ref{form:neg} above are included in the solution $S'$ of the constructed \ndm{4} instance.
If a type-\ref{form:pos} quadruple is included in the solution for some $(x_i,y_j,z_k) \in T$, then the corresponding type-\ref{form:neg} quadruple must also be included because there is no other way to cover the element $(10,0,i,0,j,k)_B \in Y'$, and similarly for the reverse. 
To confirm that the rest of the elements can be covered by the type-\ref{form:cleanup} quadruples, notice that there are exactly the correct number of $(10,i,-i,0,0,0)_B \in W'$ and $(10,0,0,j,-j,0) \in X'$ elements.
In particular, for every $i$, there are $m_T(x_i)$ triples of the form $(10,-i,0,\ast,0,\ast)_B \in Z'$ and exactly one of these is covered by a type-\ref{form:pos} quadruple, so the remaining ones can be matched with the $m_T(x_i)-1$ elements of the form $(10,i,-i,0,0,0)_B \in W'$, and similarly for the $y_j$.  
Thus we have a parsimonious reduction from \threedm{} to \ndm{4}.

We can verify the claim that $Y' \cup (Y'+Z')$ is a set (not a multiset) using the initial assumption that no two triples in $T$ agree on more than one coordinate.
First, $Y'$ is a set because, for each $z_k$, there is exactly one element $(10,0,0,0,0,k)_B \in Y'$, and for each triple $(x_i,y_j,z_k) \in T$, there is exactly one element $(10,0,i,0,j,k)_B \in Y'$; these two types of elements are disjoint because the third and fifth digits are always zero in the former but nonzero in the latter.
Similarly, $Z'$ is a set (a fact we will need later).
Also, $Y'$ and $Y'+Z'$ are disjoint because the first digit of any element of $Y'$ is $10$ while the first digit of any element of $Y'+Z'$ is at least $20$.

To see that $Y'+Z'$ is a set, consider two equal sums $s_1 = y_1' + z_1'$ and
$s_2 = y_2' + z_2'$ for $y_1',y_2' \in Y'$ and $z_1',z_2' \in Z'$. From $s_1 =
s_2$, it follows that $y_2'-y_1' = z_2'-z_1'$. We claim that, if $s_1 =
s_2$, then $z_1' = z_2'$ and thus $y_1' = y_2'$, which suffices because we
argued that $Y'$ and $Z'$ are sets.
To prove the claim, we have two cases, one for each type of element of~$Z'$:

\paragraph{Case 1:} If $z_1' = (11,0,-i_1,0,0,0)_B$, then $z_2' = (11,0,-i_2,0,0,0)$ or else $s_1$ and $s_2$ would differ in the first digit. Thus $y_2'-y_1' = z_2'-z_1' = (0,0,i_1-i_2,0,0,0)_B$, but by the assumption that there are no two distinct triples of $T$ sharing both $y_j$ and $z_k$, there are no two distinct elements of $Y'$ whose last two digits are equal but whose third digits are not, so this difference is impossible unless $y_1' = y_2'$ and $z_1' = z_2'$.

\paragraph{Case 2:} If $z_1' = (10,-i_1,0,-j_1,0,-k_1)_B$, then $z_2' = (10,-i_2,0,-j_2,0,-k_2)_B$ or else $s_1$ and $s_2$ would differ in the first digit. Thus $y_2' - y_1' = z_2'-z_1' = (0,i_1-i_2,0,j_1-j_2,0,k_1-k_2)_B$, but no elements of $Y'$ have nonzero second or fourth digits, so it must be that $i_1 = i_2$ and $j_1 = j_2$. By the assumption that no distinct triples of $T$ share both $x_i$ and $y_j$, it must be that $k_1 = k_2$ as well, so $z_1' = z_2'$ as claimed.
\end{proof}

\begin{theorem} \label{thm:n3dm}
  \ndm{3} is strongly ASP-hard and \#P-hard,
  even if $X$ is required to be a set (not a multiset).
\end{theorem}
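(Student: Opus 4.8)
The plan is to give a parsimonious reduction from \ndm{4} that exploits the Theorem~\ref{thm:n4dm} guarantee that $Y \cup (Y+Z)$ is a set; this mirrors Garey and Johnson's \partition{4}-to-\partition{3} step, collapsing the two source sets $Y$ and $Z$ into a single coordinate of the target instance. I would work throughout in base-$B'$ positional notation for a base $B'$ polynomial in $n$ and in the source target~$t$ (preserving strong hardness), using a constant number of digits that never carry into one another. The target instance's first set --- the one Theorem~\ref{thm:n3dm} requires to be an honest set --- is taken to be $X := Y \cup (Y+Z) = \{y_a\}_{a} \cup \{\,y_a + z_b : 1 \le a, b \le n\,\}$, which is a set exactly by the guarantee. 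A ``real'' triple has the form $(y_a + z_b,\, w_i,\, x_j)$ and is designed to sum to the new target if and only if $w_i + x_j + y_a + z_b = t$, so it encodes a source quadruple. Since $|X| = n + n^2$, I would pad the source sets $W$ and (source-)$X$ up to size $n + n^2$ with ``dummy'' elements, so that every target solution consists of $n$ real triples together with $n^2$ dummy triples.

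The heart of the argument is the design of the dummies, which must (i) cover the $n$ plain elements $y_a$ and the $n^2 - n$ pair-elements $y_a + z_b$ not consumed by the real triples, (ii) force the $n$ pair-elements that \emph{are} consumed to form a perfect matching between the $Y$-indices and the $Z$-indices (so each $y_a$ and each $z_b$ is used exactly once, as a source solution demands), and (iii) admit a unique completion once the real triples are fixed, so the reduction stays parsimonious. The key gadget is a ``row-exhaustion'' trick: for each $a$ include $n - 1$ dummies whose value digit is $-y_a$, so each must combine with some pair-element $y_a + z_b$ --- and only with such a one, since $(a,b) \mapsto y_a + z_b$ is injective by the guarantee --- leaving exactly one pair-element per row for the real triples. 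Pairing these against ``column'' dummies carrying the distinct values $-z_b$ (distinct because $Z$ is a set, the ``fact we will need later'' noted in the proof of Theorem~\ref{thm:n4dm}), a counting argument on how many copies of each column dummy are consumed then forces each column to survive exactly once, so the surviving pairs form a permutation. A dedicated ``kind'' digit on every element prevents real triples, row-exhaustion triples, and the triples covering the plain $y_a$'s from mixing, and the dummy \emph{values} (not just their multiplicities) are chosen to pin the completion down uniquely.

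Granting this, the bijection is immediate: a source solution $S$ maps to the real triples $\{(y_a + z_b, w_i, x_j) : (w_i,x_j,y_a,z_b) \in S\}$ together with the forced dummy triples, and conversely the real triples of any target solution read off a unique quadruple set, which is a legal \ndm{4} solution precisely because the surviving pairs form a permutation. As this correspondence is one-to-one, the numbers are polynomial in the target size, and $X$ is a set, we get strong ASP- and \#P-hardness of \ndm{3} with the stated restriction. I expect the main obstacle to be items (ii) and (iii): making a single dummy gadget both enforce the perfect-matching condition and remain rigid enough for parsimony is the delicate point, and it is exactly where the full strength of the guarantee --- injectivity of $(a,b)\mapsto y_a+z_b$ together with the set-ness of $Z$ --- is needed; everything else is routine base-$B'$ bookkeeping.
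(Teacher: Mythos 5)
Your outline is correct and shares its skeleton with the paper's proof: both adapt Garey and Johnson's \partition{4}-to-\partition{3} step, take the set-valued coordinate of the new \ndm{3} instance to be the $n$ singletons together with the $n^2$ pair-sums (distinct exactly by the Theorem~\ref{thm:n4dm} guarantee), let $n$ ``real'' triples each carry one pair-sum plus one element from each of the other two source sets, and mop up the remaining $n^2$ first-coordinate elements with rigid dummy triples. Where you genuinely diverge is the mechanism forcing the $n$ consumed pair-sums to form a permutation. The paper keeps the singletons active: each $w'_i$ must marry some $x'_j$ through a complementary element $u[w_i,x_j]$ (indices forced to agree because $W+X$ is a set), and the leftover $u$'s must pair with their $\bar{u}$ partners via $n^2-n$ copies of a single constant $C$, so the matching used by the real triples is the complement of the $C$-cleanup and is automatically a permutation. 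You make the singletons inert and instead enforce the matching by multiplicity counting---$n-1$ copies of a $-y_a$ dummy per row and $n-1$ copies of a $-z_b$ dummy per column exhaust all but one pair-sum in each row and each column---which is a valid and arguably more symmetric alternative; it correctly leans on $Z$ being a set (a fact the paper explicitly records for later use) and on injectivity of $(a,b)\mapsto y_a+z_b$ to pin down each dummy triple's indices, and your counts ($n(n-1)$ row and column dummies each, $n$ singleton-cover triples, all three sets of size $n^2+n$) are consistent, with a unique completion once the real triples are fixed, so parsimony goes through. The one caution is that the ``routine base-$B'$ bookkeeping'' you defer is where the paper spends most of its effort---a twelve-case table of triple forms separated by residues mod $4$ plus two magnitude arguments---and your construction has more element kinds to keep from mixing than the paper's (which recycles the source $x'_j$'s as partners rather than introducing a separate dummy family), so budget for a comparable case analysis there.
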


\begin{proof}
We give a parsimonious reduction from \ndm{4} where $W \cup (W+X)$ is a set,
as guaranteed by Theorem~\ref{thm:n4dm} (relabelling $Y,Z$ to $W,X$). 
Our reduction is essentially Garey and Johnson's reduction from
4-\textsc{Partition} to 3-\textsc{Partition}
\cite[Thm.~4.4, p.~99]{Garey-Johnson-1979}, with some extra care regarding
identical elements and splitting elements into separate sets $X',Y',Z'$.

Following the reduction in \cite{Garey-Johnson-1979}, given a \ndm{4}
instance $W = \{w_1,\dots,w_n\}$, $X = \{x_1,\dots,x_n\}$,
$Y=\{y_1,\dots,y_n\}$, $Z=\{z_1,\dots,z_n\}$ with target~$t$,
%where $W\cup (W+X)$ is a set,
we will construct a \ndm{3} instance $X',Y',Z'$
with target sum $t' = (64,4)_B$ in base $B = t$.
We assume without loss of generality that every element of $W,X,Y,Z$ is strictly
between $t/5$ and $t/3$.%
\footnote{If a \ndm{4} instance has any elements $\geq t$, it trivially has no solutions (as all elements are positive).  Otherwise, we can convert it to an instance with this property by adding $2t$ to each element in $W,X,Y,Z$ and changing the target sum from $t$ to $\hat t = 9 t$. Then every element is strictly between $2t$ and $3t$, and thus strictly between $\hat t/5 = 9t / 5$ and $\hat t/3 = 9t / 3$.}
First we define the elements that will appear in
$X' \cup Y' \cup Z'$:
\begin{align*}
w'_i &= (21, 4 w_i+1)_B, \\
x'_j &= (19, 4 x_j+1)_B, \\
y'_k &= (19, 4 y_k+1)_B, \\
z'_\ell &= (21, 4 z_\ell+1)_B, \\
u[w_i, x_j] &= (24, -4(w_i+x_j)+2)_B, \\
\bar{u}[w_i,x_j] &= (20, \phantom{-}4(w_i+x_j)+2)_B, \\
C &= (20, 0)_B.
\end{align*}
Now we can construct the desired \ndm{3} instance, splitting these
elements into three multisets $X',Y',Z'$:
\begin{align*}
X'&=\{w'_i : 1 \leq i \leq n\} \cup \{\bar{u}[w_i,x_j] : 1 \leq i,j \leq n\},\\
Y'&=\{x'_j : 1 \leq j \leq n\} \cup \{z'_\ell : 1 \leq \ell \leq n\} \cup \{ n^2 - n \text{ copies of } C\}, \\
Z'&=\{u[w_i,x_j] : 1 \leq i,j \leq n\} \cup \{y'_k : 1 \leq k \leq n\}.
\end{align*}

% conveniently in "binary order":
%$(w'_i, x'_j, u[w_i, x_j])$
%$(w'_i, x'_j, y'_k)$
%$(w'_i, z'_\ell, u[w_i, x_j])$
%$(w'_i, z'_\ell, y'_k)$
%$(w'_i, C, u[w_i, x_j])$
%$(w'_i, C, y'_k)$
%$(\bar{u}[w_i,x_j], x'_j, u[w_i, x_j])$
%$(\bar{u}[w_i,x_j], x'_j, y'_k)$
%$(\bar{u}[w_i,x_j], z'_\ell, u[w_i, x_j])$
%$(\bar{u}[w_i,x_j], z'_\ell, y'_k)$
%$(\bar{u}[w_i,x_j], C, u[w_i, x_j])$
%$(\bar{u}[w_i,x_j], C, y'_k)$

% mod 0
%$(w'_i, x'_j, u[w_i, x_j])$
%$(w'_i, z'_\ell, u[w_i, x_j])$ % unwanted
%$(\bar{u}[w_i,x_j], x'_j, y'_k)$ % unwanted
%$(\bar{u}[w_i,x_j], z'_\ell, y'_k)$
%$(\bar{u}[w_i,x_j], C, u[w_i, x_j])$
% mod 1
%$(\bar{u}[w_i,x_j], x'_j, u[w_i, x_j])$
%$(\bar{u}[w_i,x_j], z'_\ell, u[w_i, x_j])$
% mod 2
%$(w'_i, C, y'_k)$
% mod 3
%$(w'_i, x'_j, y'_k)$
%$(w'_i, z'_\ell, y'_k)$
%$(w'_i, C, u[w_i, x_j])$
%$(\bar{u}[w_i,x_j], C, y'_k)$

There are $2 \times 3 \times 2 = 12$ possible forms of triples, shown below grouped by the equivalence classes modulo $4$ of the second coordinate of their sum (with shaded boxes to indicate the only triples that will turn out to be valid):

\begin{center}
\def\,{,~}
\definecolor{valid}{rgb}{0.85,1,0.85}
\begin{tabular}{cccccc}
\boldmath $0 \pmod 4$ & \boldmath $1 \pmod 4$ & \boldmath $2 \pmod 4$ & \boldmath $3 \pmod 4$ \\ \hline
\multicolumn{1}{|c|}{\cellcolor{valid}
$(w'_{i'}\, x'_{j'}\, u[w_i, x_j])$}          & $(\bar{u}[w_{\bar\imath},x_{\bar\jmath}]\, x'_{j'}\, u[w_i, x_j])$    & $(w'_{i'}\, C\, y'_{k'})$ & $(w'_{i'}\, x'_{j'}\, y'_{k'})$ \\
\hhline{|-}
$(w'_{i'}\, z'_{\ell'}\, u[w_i, x_j])$       & $(\bar{u}[w_{\bar\imath},x_{\bar\jmath}]\, z'_{\ell'}\, u[w_i, x_j])$ &                   & $(w'_{i'}\, z'_{\ell'}\, y'_{k'})$ \\
$(\bar{u}[w_{\bar\imath},x_{\bar\jmath}]\, x'_{j'}\, y'_{k'})$     &                                            &                   & $(w'_{i'}\, C\, u[w_i, x_j])$ \\
\hhline{|-}
\multicolumn{1}{|c|}{\cellcolor{valid}
$(\bar{u}[w_{\bar\imath},x_{\bar\jmath}]\, z'_{\ell'}\, y'_{k'})$} &                                            &                   & $(\bar{u}[w_{\bar\imath},x_{\bar\jmath}]\, C\, y'_{k'})$ \\
\hhline{|-}
\multicolumn{1}{|c|}{\cellcolor{valid}
$(\bar{u}[w_{\bar\imath},x_{\bar\jmath}]\, C\, u[w_i, x_j])$} &                                            &                   & \\
\hhline{|-}
\end{tabular}
\end{center}

The second coordinate of $t'$ is congruent to $0 \pmod 4$, so triples in the second, third, and fourth columns never sum to $t'$.

Of the triples in the first column, two of them cannot actually sum to~$t'$.
Triples of the form $(w'_{i'}, z'_{\ell'}, u[w_i, x_j])$ sum to $(66, 4(w_{i'} + z_{\ell'} - w_i - x_j)+4)_B$.  For this to equal $t'$, the second coordinate must equal $-2t+4$, so $w_{i'} + z_{\ell'} - w_i - x_j$ must equal $-t/2$.  But by the assumption that every element of $W,X,Y,Z$ is strictly between $t/5$ and $t/3$, the smallest possible value for $w_{i'} + z_{\ell'} - w_i - x_j$ is greater than $-4t/15$, so triples of this form never sum to $t'$.  Similarly, triples of the form $(\bar{u}[w_{\bar\imath},x_{\bar\jmath}], x'_{j'}, y'_{k'})$ sum to $(58, 4(w_{\bar\imath} + x_{\bar\jmath} + x_{j'} + y_{k'})+4)_B$.  For this to equal $t'$, the second coordinate must equal $6t+4$, so $w_{\bar\imath} + x_{\bar\jmath} + x_{j'} + y_{k'}$ must sum to $3t/2$, but the largest possible value for that expression is less than $4t/3$, so triples of this form never sum to $t'$.

This leaves three forms of triples that can sum to $t'= (64,4)_B$:
\[
(w'_{i'}, x'_{j'}, u[w_i, x_j]), \quad (\bar{u}[w_{\bar\imath},x_{\bar\jmath}], z'_{\ell'}, y'_{k'}), \quad \text{and} \quad (\bar{u}[w_{\bar\imath},x_{\bar\jmath}], C, u[w_i,x_j])\text{.}
\]
A triple of the second form encodes a quadruple in the input \ndm{4} instance; the triple sums to $t'$ (after a carry $(60,4t+4)_B = (64,4)_B$) exactly when $w_{\bar\imath} + x_{\bar\jmath} + z_{\ell'} + y_{k'} = t$.  The map $w_{\bar\imath} + x_{\bar\jmath} \mapsto \bar{u}[w_{\bar\imath},x_{\bar\jmath}]$ is one-to-one, so from our assumption that $W+X$ is a set, $\{\bar{u}[w_i,x_j]\}$ is also a set, and so this encoding is unique.  A triple of the first form sums to $t'$ exactly when $w_{i'}+x_{j'}-w_i-x_j = 0$. Because $W+X$ is a set and the map $w_i+x_j \mapsto u[w_i,x_j]$ is one-to-one, we must have $i' = i$ and $j' = j$ in valid triples of the first form, uniquely collecting the $u[w_i, x_j]$ elements corresponding to $\bar{u}[w_{\bar\imath},x_{\bar\jmath}]$ elements used in triples of the second form.  Similarly, $\bar\imath=i$ and $\bar\jmath=j$ in valid triples of the third form, uniquely collecting the unused $u[w_i,x_j]$ and $\bar{u}[w_i,x_j]$ elements using all $n^2-n$ copies of $C$.  Thus there is a one-to-one correspondence between solutions to the input \ndm{4} instance and the constructed \ndm{3} instance, so the reduction is parsimonious and \ndm{3} is ASP- and \#P-hard.

It remains to verify that $X' = \{w'_i\} \cup \{\bar{u}[w_i,x_j]\}$ is a set (not a multiset).  We argued above that $\{\bar{u}[w_i,x_j]\}$ is a set (using that $W+X$ is a set), and $\{w'_i\}$ is a set because we assumed $W$ is a set and the map $w_i \mapsto w'_i = (20,4 w_i+1)_B$ is one-to-one.
% Similarly, $\{\bar{u}[w_i,x_j]\}$ is a set because we assumed $W+X$ is a set and $\bar{u}[w_i,x_j] = (20, 4(w_i+x_j)+2)_B$ is one-to-one.
It remains to show that $\{w'_i\}$ is disjoint from $\{\bar{u}[w_i,x_j]\}$,
which follows
because $w'_i \equiv 1 \pmod 4$ and $\bar{u}[w_i,x_j] \equiv 2 \pmod 4$.
Therefore $X'$ is a set.
%\xxx{possibly simpler: We assumed $W \cup (W+X)$ is a set, we scaled all elements by $4$, then added $20B$ to each element, which is a one-to-one affine map.  We then added either $1$ or $2$ to each element, which cannot introduce duplicates because we scaled by $4$.}
\end{proof}

\section{Parsimonious Reductions from Numerical 3DM to Path Puzzles}
\label{numerical3dm-to-path-puzzles}
The goal of this section is to parsimoniously reduce \ndm3
(as analyzed in Section~\ref{ndm})
to \pathpuzzle, thereby proving the latter strongly NP-, ASP-, and \#P-hard.
We first introduce a more geometric view of \ndm3, called
\lengthoffsets, and prove its equivalence.
%, which is the bulk of this section.
It will then be relatively easy to represent \lengthoffsets\
as a \pathpuzzle.

\begin{problem}[\problemfont\lengthoffsets]
Given a set (not a multiset) of positive integer lengths
$a_1, a_2, \allowbreak \dots, a_n$,
and given $m$ nonnegative integer target densities
$t_0, t_1, \dots, t_{m-1}$,
can we place $n$ intervals with integer endpoints within $[0,m]$
%integer left endpoints $b_1, b_2, \dots, b_n$
and lengths $a_1, a_2, \dots, a_n$, respectively,
such that the number of intervals overlapping $(i,i+1)$ is exactly
the target density~$t_i$?
In other words, can we choose nonnegative integer offsets
$b_1, b_2, \dots, b_n$ such that
$a_j + b_j \leq m$ for each $j$ ($1 \leq j \leq n$);
and, for each $i$ ($0 \leq i < m$),
there are exactly $t_i$ indices $j$ such that $b_j \le i < a_j + b_j$?
%do there exist offsets $b_1, b_2, \dots, b_n$,
%forming intervals $[b_j, a_j + b_j]$, such that
%the number of intervals overlapping $(i,i+1)$ is exactly $t_i$,
%for each $i$ with $1 \leq i \leq m$.
%In other words, for each $i$ with $1 \leq i \leq m$,
%there are exactly $t_i$ indices $j$ such that $b_j \le i < a_j + b_j$?
\end{problem}

%Geometrically, a solution to a \lengthoffsets{} instance is a placement of $n$ segments of length $a_1, \dots, a_n$ onto the integer line such that, for each~$i$ with $0 \leq i \leq m$, precisely $t_i$ segments begin at or pass through $i$. 

\begin{theorem} \label{thm:length-offsets}
\lengthoffsets{} is parsimoniously reducible from \ndm{3} in which at least one of the three multisets is actually a set.
\end{theorem}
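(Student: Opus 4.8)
I would reduce from \ndm{3} with $X$ a set, as furnished by Theorem~\ref{thm:n3dm}. Given an instance $(X,Y,Z,t)$, first dispose of the trivial case in which some element is $\ge t$ (then no solution exists, so emit a fixed unsatisfiable \lengthoffsets instance); thereafter every element lies in $\{1,\dots,t-1\}$. I would then build a \lengthoffsets instance with $n$ intervals, the $i$th of length $a_i := x_i + t$ — these are pairwise distinct precisely because $X$ is a set, which is exactly what \lengthoffsets demands of its lengths — with ambient length $m := 2t$ and target densities (written $t_c$ in the problem, renamed $d_c$ here to avoid clash with the \ndm{3} target $t$)
$$d_c \;:=\; n - \bigl|\{k : z_k > c\}\bigr| - \bigl|\{j : y_j \ge 2t - c\}\bigr|, \qquad 0 \le c < 2t .$$
Each $d_c$ lies in $\{0,\dots,n\}$: for $c \le t$ the last term vanishes and for $c \ge t+1$ the middle term vanishes, since every element is $\le t-1$. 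The intended correspondence sends an \ndm{3} solution, written $\{(x_i,\eta_i,\zeta_i)\}_{i=1}^n$ with $\{\eta_i\}=Y$ and $\{\zeta_i\}=Z$ as multisets and $x_i+\eta_i+\zeta_i=t$, to the placement putting interval $i$ at offset $b_i := \zeta_i$, so that interval $i$ covers exactly the columns $[\zeta_i,\,2t-\eta_i)$.

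The forward direction should be a one-line count: interval $i$ misses column $c$ iff $\zeta_i > c$ or $\eta_i \ge 2t-c$, and these cannot both hold, since then $x_i = t-\eta_i-\zeta_i$ would be negative; hence the number of intervals missing $c$ is $|\{k:z_k>c\}| + |\{j:y_j\ge 2t-c\}|$ and the density at $c$ is exactly $d_c$. For the converse, the crux is that every column of $[\max_k z_k,\ 2t-1-\max_j y_j] \supseteq \{t-1,t\}$ has target density $n$, so in any valid placement every interval covers columns $t-1$ and $t$; since $a_i = x_i+t$, this forces $b_i \le t-1$ and $b_i + a_i \ge t+1$ for every $i$. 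Then for $c \le t-2$ no interval has ended yet, so the density at $c$ equals $|\{i:b_i\le c\}|$; comparing with $d_c = |\{k:z_k\le c\}|$ over all $c$ (both equal $n$ once $c\ge t-1$) forces $\{b_i\} = Z$ as multisets, and symmetrically the columns in $[t+1,2t-1]$ force $\{b_i+a_i\} = \{2t-y_j\}$ as multisets. Reading off $S := \{(x_i,\ t-x_i-b_i,\ b_i) : i\in[n]\}$ then yields an \ndm{3} solution that is inverse to the forward map; because $X$ is a set these value-triples are automatically distinct, so the two maps are mutually inverse bijections and the reduction is parsimonious. It is moreover polynomial and blows sizes up only by a constant factor ($a_i \le 2t$, $m=2t$, $d_c\le n$), so strong ASP- and \#P-hardness are preserved.

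The step I expect to be the main obstacle is this converse analysis: making airtight that the two central density-$n$ columns genuinely pin every interval across the centre, and that the flanking "staircase" profiles then recover $Z$ and $\{t-y_j\}$ exactly with no spurious placement; intertwined with it is the bookkeeping showing the placement-to-solution map stays exactly one-to-one even when $Y$ or $Z$ has repeated values, which is where viewing \ndm{3} solutions as sets of value-triples and exploiting that the first coordinate ranges over the set $X$ does the work.
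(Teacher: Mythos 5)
Your reduction is correct and is, at its core, the same reduction as the paper's: lengths come from $X$ (a set, so the \lengthoffsets{} distinctness requirement is met), the target density at each column is $n$ minus the number of triples whose $Y$- or $Z$-interval would cover that column when one multiset is aligned to the left wall and the other to the right, the forward map sets each offset to the matched element of the left-aligned multiset, and the converse recovers the two multisets by telescoping the density differences $d_c - d_{c-1}$ on the left and right halves separately. The one genuine divergence is how you guarantee that start events and end events live in disjoint ranges of columns (which is what makes the telescoping argument clean): the paper pre-normalizes the \ndm{3} instance so that every element lies strictly in $(t/4,t/2)$ and keeps $a_i = x_i$, $m = t$, whereas you leave the instance alone and instead inflate each length to $x_i + t$ inside an ambient window of width $2t$, so that every interval is forced across the two middle columns and hence all left endpoints are $\le t-1$ while all right endpoints are $\ge t+1$. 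The two devices are interchangeable here; note only that the paper's normalization is also what powers Lemma~\ref{thm:endpoint-sets} (no left endpoint coincides with a right endpoint), which the downstream \pathpuzzle{} reduction in Theorem~\ref{thm:lo-to-pp} relies on---your construction satisfies the same property, even more strongly, so nothing downstream would break, but you would need to restate that lemma for your variant. Your handling of multiplicities in $Y$ and $Z$ (identifying \ndm{3} solutions with sets of value-triples keyed by the set $X$) matches the paper's convention exactly.
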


\begin{proof}
We give a parsimonious reduction from \ndm{3} where $X$ is a set,
as guaranteed by Theorem~\ref{thm:n3dm}.
Specifically, consider a \ndm{3} instance with set $X = \{x_1,
\ldots, x_n\}$, multisets $Y$ and $Z$, and a target sum $t$.
Assume without loss of generality that every element of $X,Y,Z$ is strictly
between $t/4$ and $t/2$.%
\footnote{If a \ndm{3} instance has any elements $\geq t$, it trivially has no solutions (as all elements are positive).  Otherwise, we can convert it to an instance with this property by adding $t$ to each element in $X,Y,Z$ and changing the target sum from $t$ to $\hat t = 4 t$. Then every element is strictly between $t$ and $2t$, and thus strictly between $\hat t/4 = 4t / 4$ and $\hat t/2 = 4t / 2$.}
We construct a \lengthoffsets{} instance that we claim has
the same number of solutions: the $n$ lengths are given simply by $a_i = x_i$,
and the target densities are given by
$t_i = n - |\{y \in Y : y > i\}| - |\{z \in Z : t-z \leq i\}|$,
where $0 \leq i < m$ and $m=t$.
See Figure~\ref{fig:lenoff} for an example.
The intuition is that we place intervals for $X,Y,Z$,
left-align the intervals for $Y$, right-align the intervals for~$Z$,
and count the remaining density for $X$ intervals.

\begin{figure}
\centering
\includegraphics[scale=0.75]{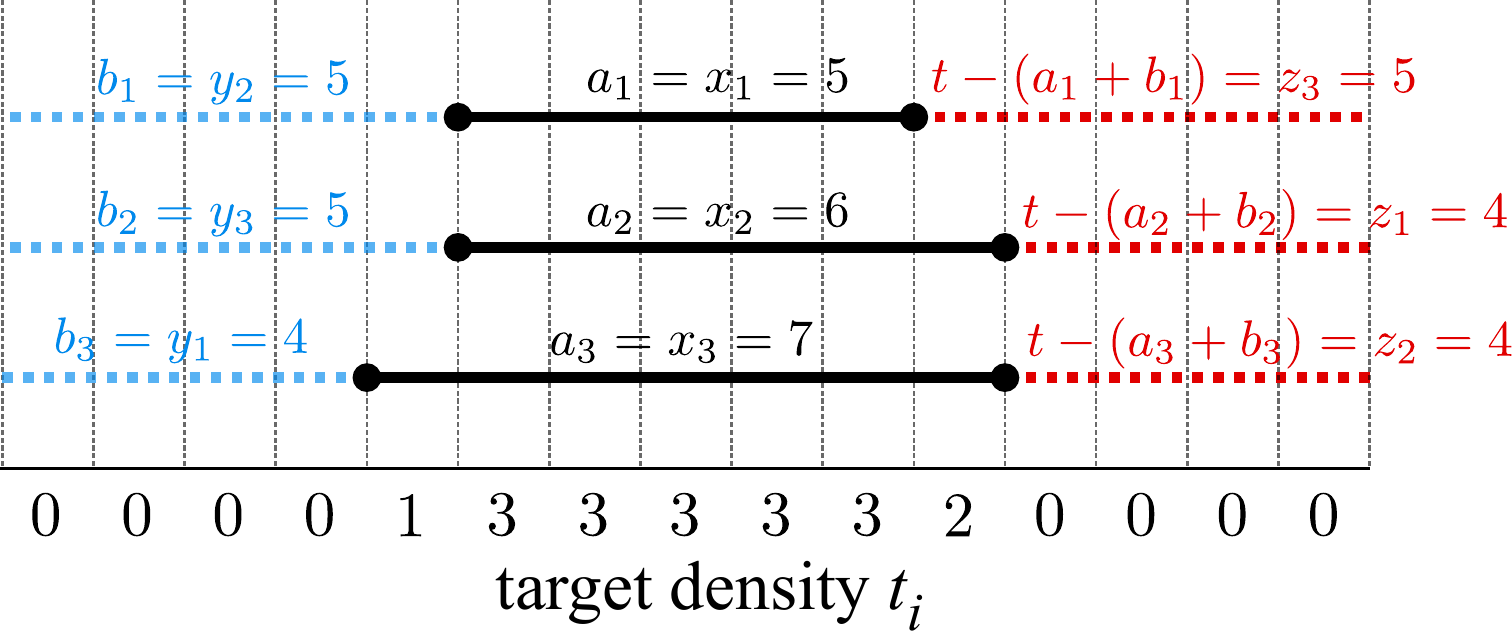}
\caption{\lengthoffsets{} instance obtained by reducing from \ndm{3}
  where $X = \{5,6,7\}, Y = \{4,5,5\}, Z = \{4,4,5\}, t = 15$; and
  its solution corresponding to the \ndm{3} solution of
  $(5,5,5)$, $(5,6,4)$, $(4,7,4)$.}
\label{fig:lenoff}
\end{figure}

It remains to show that every solution to the original \ndm{3} instance corresponds to a solution to the constructed \lengthoffsets{} instance, and different \ndm{3} solutions correspond to different \lengthoffsets{} solutions. Equivalently, we will provide an injective map from \ndm{3} solutions to \lengthoffsets{} solutions, and an injective map from \lengthoffsets{} solutions to \ndm{3} solutions.
%; consequently, there must be the same number of \ndm{3} solutions as \lengthoffsets{} solutions under this reduction. 

\paragraph{N3DM to \lengthoffsets.} To convert a \ndm{3} solution into a \lengthoffsets{} solution, we assign $b_j = y_k$ for each solution triple $(x_j, y_k, z_\ell)$.  Figure~\ref{fig:lenoff} shows this solution for the example.
%the solution to an instance of the \lengthoffsets{} problem constructed by reducing from a \ndm{3} instance for which $$X = \{5,6,7\}, Y = \{4,5,5\}, Z = \{4,4,5\}, t = 15.$$ The solution shown corresponds to the \ndm{3} solution of $(5,5,5)$, $(5,6,4)$, $(4,7,4)$.

For each $i$ and each triple $(x_j,y_k,z_\ell)$,
$b_j \le i < a_j + b_j$ if and only if
$y_k \le i$ and $i < x_j + y_k = t - z_\ell$, so either
\be
\ii $y_k > i$; or 
\ii $t - z_\ell \le i$; or 
\ii $b_j \le i < a_j + b_j$. 
\ee
The first case applies $|\{y \in Y : y > i\}|$ times, and
the second case applies $|\{z \in Z : t-z \le i\}|$ times,
so the third case applies
$n - |\{y \in Y : y > i\}| - |\{z \in Z : t-z \le i\}| = t_i$ times.
Thus our choice of the offsets is a valid solution to the
\lengthoffsets{} instance.

If two \ndm{3} solutions differ, then (using that $X$ is a set) some $x$ is matched with a different $y$ in each solution, so when converting those solutions to \lengthoffsets{} solutions, we assign the corresponding length different offsets.

\paragraph{\lengthoffsets{} to N3DM.} To convert a \lengthoffsets{} solution into a \ndm{3} solution, for each length--offset pair $(a_i, b_i)$, we match the triple $(a_i, b_i, t-a_i-b_i)$.
These triples obviously sum to $t$ and are therefore legal, but we need to
show that their elements exist and cover $X$, $Y$, and $Z$, respectively.

\be
\ii Every $a_i$ is an $x_i$ and vice versa, so $X$ is covered by $\{a_i\}$. 
\ii For each $i$, we have
\begin{align*}
 t_{i} - t_{i-1} &= (n - |\{y \in Y: y > i\}| - |\{z \in Z: t-z \le i\}|) \\
&\phantom{=}\, - (n - |\{y \in Y: y > i-1\}| - |\{z \in Z: t-z \le i-1\}|)\\
 &= |\{y \in Y: y = i\}| - |\{z \in Z: t-z = i\}|
\end{align*}

For $i < t/2$, the second term is $0$ (because $z < t/2$ by assumption), so
$t_i - t_{i-1}$ is precisely the number of elements of $Y$ that equal $i$. On
the other hand, in a \lengthoffsets{} instance, when $i < t/2$, $t_i -
t_{i-1}$ is the number of segments which pass through $i$ but not $i-1$, i.e.,
the number of segments which begin at $i$ and therefore have offset $b_j = i$.
Therefore, $Y$ is covered by $\{b_j\}$.

\ii Following the same argument as above, but for $i > t/2$, we have that $t_i - t_{i-1} = -|\{z \in Z: t-z = i\}|$. 
On the other hand, in the \lengthoffsets{} problem, for $i > t/2$, $t_i - t_{i-1}$ is the negative of the number of 
segments which \emph{end} at $i-1$; i.e., it is the negative of the number of indices $j$ such that
$a_j + b_j = i$, or equivalently, $t - (t - (a_j+b_j)) = i$. Thus, $Z$ is covered by $\{t-(a_j+b_j)\}$ as claimed. 
\ee

Different solutions to the \lengthoffsets{} instance correspond to different
solutions to the \ndm{3} instance because, if two \lengthoffsets{} solutions
differ, then some length $a_j$ gets different offsets, so the corresponding
$x_j$ is matched to different elements of $Y$ in the two \ndm{3} solutions.

We have shown two injective maps between solutions of the \lengthoffsets{}
instance and solutions of the \ndm{3} instance,
so our reduction is parsimonious.
\end{proof}

We now make a brief observation that we make use of later.

% this doesn't have to be a 'lemma', but it should at least be a 'remark' -- something we can \ref{} when we use it.
\begin{lemma}
\label{thm:endpoint-sets}
In every solution to \lengthoffsets{} instances produced by the reduction in the proof of Theorem~\ref{thm:length-offsets}, no line segment shares its left endpoint with the right endpoint of another.
\end{lemma}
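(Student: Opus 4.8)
The plan is to read the claimed disjointness of the left‑endpoint and right‑endpoint multisets directly off the value of the density profile near the midpoint $t/2$. Recall that the reduction builds a \lengthoffsets{} instance with $m=t$, lengths $a_i=x_i$, and target densities $t_i = n - |\{y\in Y : y>i\}| - |\{z\in Z : t-z\le i\}|$, after normalizing so that every element of $X,Y,Z$ lies strictly between $t/4$ and $t/2$; only the upper bound $<t/2$ will be used here.

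First I would evaluate the density at $i^\star := \lceil t/2\rceil - 1$, the largest integer strictly below $t/2$ (a legal index, $0\le i^\star < m$, since $t\ge 1$). Because every $y\in Y$ is an integer with $y<t/2$, we get $y\le i^\star$, so $|\{y\in Y : y>i^\star\}|=0$; because every $z\in Z$ is an integer with $z<t/2$, we get $t-z>t/2$ and hence $t-z\ge i^\star+1$, so $|\{z\in Z : t-z\le i^\star\}|=0$. Thus $t_{i^\star}=n$.

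Next I would push this through to an arbitrary solution. A solution consists of exactly $n$ intervals, and the density constraint at $i^\star$ says that $t_{i^\star}=n$ of them overlap the unit interval $(i^\star,i^\star+1)$; hence \emph{all} $n$ intervals overlap it. Therefore every interval $[b_j,\,a_j+b_j]$ satisfies $b_j\le i^\star$ and $a_j+b_j\ge i^\star+1$. So in any solution every left endpoint is at most $i^\star$ while every right endpoint is at least $i^\star+1>i^\star$, and the two cannot coincide; in particular no segment's left endpoint equals another segment's right endpoint, which is the claim.

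The only delicate point is the rounding at the boundary: one should check that $\lceil t/2\rceil-1 < t/2 \le \lceil t/2\rceil$ works uniformly for $t$ even and $t$ odd, that $i^\star$ is indeed in range, and---most importantly---that it is the \emph{strict} inequality $<t/2$ in the reduction's normalization, not merely $\le$, that forces $t_{i^\star}$ to equal exactly $n$. Everything else is a one‑line computation.
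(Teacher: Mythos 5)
Your proof is correct, and it takes a genuinely different route from the paper's. The paper argues directly from the solution correspondence established in Theorem~\ref{thm:length-offsets}: in every solution the offsets $b_i$ are elements of $Y$ and the lengths $a_i$ are elements of $X$, so the normalization $t/4 < x,y < t/2$ gives $b_i < t/2 < a_j + b_j$ for all $i,j$, separating left from right endpoints at the midpoint. You reach the same separation at $t/2$, but you derive it purely from the density profile of the constructed instance: since every $y$ and every $z$ is an integer strictly below $t/2$, the target density at $i^\star = \lceil t/2\rceil - 1$ equals $n$, so all $n$ intervals must straddle $(i^\star, i^\star+1)$, forcing every left endpoint to be at most $i^\star$ and every right endpoint to be at least $i^\star+1$. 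Your version is slightly more self-contained --- it does not need to re-invoke the fact that every \lengthoffsets{} solution decodes to an \ndm{3} solution with $b_j \in Y$, only the formula for $t_i$ and the upper bound $<t/2$ --- while the paper's version is a two-line consequence of machinery it has already built. Your attention to the rounding at $i^\star$ (checking $y \le i^\star$ and $t - z \ge i^\star + 1$ for both parities of $t$, and that $i^\star$ is a legal index) is exactly the care the argument needs, and it holds up.
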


\begin{proof}
By assumption, all elements of the \ndm{3} instance lie in the exclusive interval $(t/4,t/2)$.  Our reduction sets $a_i = x_i$, so $t/4 < a_i < t/2$.  Our mapping between solutions assigns $b_i = y_k$, so $t/4 < b_i < t/2$.  Adding these inequalities yields $t/2 < a_i + b_i < t$ for all $i$.  Then $b_i < t/2 < a_j + b_j$ for all $i$ and $j$, so the sets of left and right endpoints are disjoint.
\end{proof}

We are now ready to prove our main theorem.

\begin{figure}
\centering
\includegraphics[width=\textwidth]{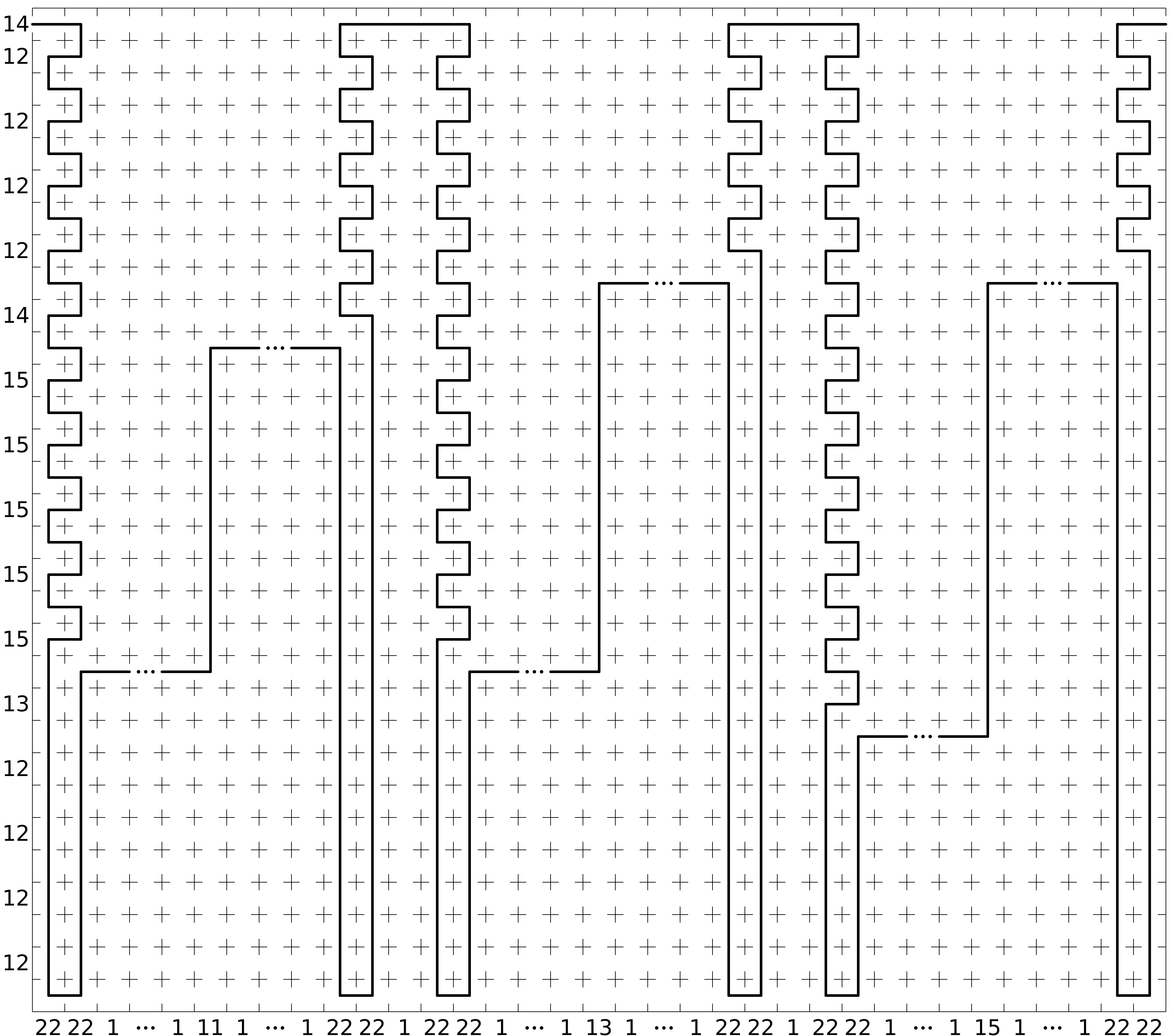}
\caption{Solution to a \pathpuzzle{} instance, reduced from \lengthoffsets{} from Figure~\ref{fig:lenoff} where $n = 3, m = 15, a_i = (5,6,7), t_i = (0,0,0,0,1,3,3,3,3,3,2,0,0,0,0)$.
%As such, this instance has been indirectly reduced (through \lengthoffsets{}) from the \ndm{3} instance used above where $X = \{5,6,7\}, Y = \{4,5,5\}, Z = \{4,4,5\}$. \\
Ellipses elide sections of $6n=18$ columns each labeled~$1$. }
\label{fig:path-puzzle}
\end{figure}

\begin{theorem}
\pathpuzzle{} is NP-, \#P- and ASP-hard.
\label{thm:lo-to-pp}
\end{theorem}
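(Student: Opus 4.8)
The plan is to give a parsimonious reduction from \lengthoffsets\ (Theorem~\ref{thm:length-offsets}); chained with the reductions of Figure~\ref{fig:chain} and with Lemma~\ref{lem:1in3}, this makes \pathpuzzle\ NP-, ASP-, and \#P-hard, and in fact \emph{strongly} so. Since a claimed solution is a single self-avoiding path of size polynomial in the grid, which can be verified and whose row/column cell counts can be compared to the clues in polynomial time, the matching completeness results follow as well. We may take the \lengthoffsets\ instance to be one produced by the proof of Theorem~\ref{thm:length-offsets}, so that the lengths $a_1,\dots,a_n$ form a set, every left endpoint of every placed interval lies in $(m/4,m/2)$ and every right endpoint in $(m/2,m)$ (here $m=t$), and, by Lemma~\ref{thm:endpoint-sets}, in every solution no interval's left endpoint coincides with another interval's right endpoint.

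The construction builds one rectangular grid containing a block of $m$ consecutive \emph{density columns}, one for each unit cell $(i,i+1)$ of $[0,m]$, with column clues $t_0,\dots,t_{m-1}$. The intended path traverses this block $n$ times, the $j$th traversal being a single contiguous run that covers exactly the density columns $b_j,\dots,b_j+a_j-1$: the offset $b_j$ is free, while the run length is forced to equal $a_j$ by a row clue attached to that traversal. To splice the $n$ traversals into one path between two boundary doors, and to force each traversal's two endpoints, the grid also carries auxiliary \emph{rail columns} --- the blocks of $6n$ columns, each with clue $1$, elided by the ellipses in Figure~\ref{fig:path-puzzle} --- together with a constant number of extra rows per traversal. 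Every row and every column receives a clue, so the instance has complete row/column information; and since \lengthoffsets\ is strongly hard, $m$, $n$, and the $a_j$ are polynomially bounded, so the grid has polynomial size and all clues are small, yielding \emph{strong} hardness.

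Correctness splits into two directions. Given offsets $b_1,\dots,b_n$ that solve \lengthoffsets, route the $j$th density-block traversal over columns $b_j,\dots,b_j+a_j-1$ and complete the wiring through the rails in the canonical way; this is a legal single path, and the clue of density column $i$ is met precisely because exactly $t_i$ of the $n$ traversals cover position $i$. Conversely, one shows that every path-puzzle solution has this canonical shape: the rail clues of $1$, together with self-avoidance and the fixed geometry of the rails, force the path to cross each rail column exactly once in the intended manner, which in turn forces the $j$th density-block traversal to be a single contiguous run; its length is $a_j$ by the associated row clue, and the density-column clues then say exactly that the induced offsets $b_1,\dots,b_n$ solve \lengthoffsets. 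These maps are mutually inverse, so the reduction is parsimonious.

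The crux, and the step I expect to be the main obstacle, is engineering the rail gadget and choosing every clue value so that the canonical shape above is genuinely the \emph{only} possibility: there must be no spurious path-puzzle solution in which a density-block traversal is split into several runs, two consecutive traversals are short-circuited through the rails, or the rail columns are used in an unforeseen configuration --- even though a priori the path may be globally complicated. The width-$6n$ rail blocks and the exact clue values are what pin the routing down. The one genuinely delicate case is when two intervals are placed so that the right endpoint of one equals the left endpoint of another, since then the path could fuse their two runs into one and mis-parse the lengths; this is exactly the configuration that Lemma~\ref{thm:endpoint-sets} was proved to exclude.
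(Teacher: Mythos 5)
There is a genuine gap here: what you have written is a plan whose central step --- the forcing argument showing that every \pathpuzzle{} solution has the canonical shape --- you explicitly defer (``the crux, and the step I expect to be the main obstacle''). That forcing argument is essentially the entire technical content of the paper's proof, which proceeds by a chain of seven properties (full columns forced by labels $2m+3$; sections of $6n$ columns labeled $1$ forced into single horizontal runs confined to blank rows; the middle column of each block forced to be one vertical segment; etc.) culminating in a uniqueness argument for the wiring. Without an explicit grid, explicit clue values, and that case analysis, the reduction is not established.

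Moreover, the architecture you sketch has a concrete problem that is not merely a detail. You place all $m$ density positions as a single contiguous block of columns with clues exactly $t_0,\dots,t_{m-1}$, and represent each interval as a horizontal run covering only columns $b_j,\dots,b_j+a_j-1$ of that block. But the solution is one connected path: to start a run at column $b_j$ and end it at column $b_j+a_j-1$ the path must turn vertically inside the density block, visiting extra cells in those columns; and to travel from the right end of one run to the left end of the next, it must cross the density block again (any horizontal route from one side of a contiguous column block to the other passes through every column of the block). Both effects inflate the density-column counts beyond $t_i$, so the clues as you state them cannot be met by the intended solution, and patching them requires knowing exactly how many crossings occur --- which is precisely the kind of global routing control you have not pinned down. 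The paper's construction is, in effect, the transpose of yours and is engineered to avoid exactly this: each interval lives in its \emph{own} block of columns as a \emph{vertical} segment (forced by a column label $2a_j+1$ flanked by columns labeled $1$), the densities are \emph{row} labels of the form $4n+t_i$ where the constant $4n$ accounts for the four forced-full columns per block, and the labeled rows are interleaved with blank rows so that all horizontal routing between gadgets happens in rows whose counts are unconstrained. You correctly identify the source problem, the role of Lemma~\ref{thm:endpoint-sets}, and the parsimony requirements, but the construction itself and its forcing argument --- the actual proof --- are missing, and the sketch you give in their place does not work as stated.
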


\begin{proof}
We give a parsimonious reduction from \lengthoffsets, as produced by
Theorem~\ref{thm:length-offsets} so that Lemma~\ref{thm:endpoint-sets}
applies.  Given a \lengthoffsets\ problem with lengths $a_1, \dots, a_n$ and
target densities $t_0, \dots, t_{m-1}$, we construct an equivalent
\pathpuzzle\ instance as follows.
Figure~\ref{fig:path-puzzle} shows our construction instantiated for the same
\lengthoffsets{} instance from Figure~\ref{fig:lenoff}.
%Our construction is as follows:

\begin{description}
\item[Dimensions:]
  The grid has $2m+3$ rows and $(12n+6)n-1$ columns.

  We group the columns into $n$ \emph{blocks} $B_1,\dots,B_n$
  of $12n+5$ columns each, interspersed with $n-1$ \emph{lone columns}.
  Thus block $B_i$ ($1 \leq i \leq n)$
  consists of columns $(12n+6)i-(12n+5),\dots,(12n+6)i-1$
  and the $i$th lone column ($1 \leq i < n$) is column $(12n+6)i$.
\item[Doors:] We place doors at the left and right ends of the top row.
\item[Row labels:] Counting up from the bottom,
  the $(2i+2)$nd row has a label of $4n + t_i$, for each $i$ ($0 \le i \le m$);
  the $(2m+2)$nd and $(2m+3)$rd (topmost) rows have labels $4n$ and $5n-1$,
  respectively; and all other row labels are blank.
\item[Column labels:]
  Each lone column has a column label of~$1$.
  Each block $B_j$ ($1 \leq j \leq n$) has labels of $2m+3$ on its
  first two and last two columns; a label of $2 a_j + 1$ on its middle
  column ($6n+3$th column); and labels of $1$ on all other columns
  (which split into two sections of $6n$ consecutive columns).
\end{description}

%\includegraphics{path_puzzle_1.png}
%%TODO: change this figure so that the row labels are 4n+t_i instead of
%%5n-*

Any solution to this path puzzle has the following properties:
\be
\ii \label{filled columns}
    Every square in the first two and last two columns of each block is
    visited, by the column labels of $2m+3$.
\ii Every section of $6n$ consecutive columns labeled $1$ (within a block)
    corresponds to a single horizontal path, which must be in one of the blank
    rows because all row labels are less than~$6n$.
\ii \label{vertical lines}
    Every column labeled $2 a_i + 1$ is a single vertical line segment,
    because both neighboring columns are labeled $1$, just enough
    to enter and exit the column once.
\ii No top square of a column labeled $2 a_i + 1$ is visited, because if
    one were, the second square from the top of such a column would also be
    visited (by Property~\ref{vertical lines} and because $2 a_i + 1 > 1$),
    but then that row would have more than $4n$ visited squares
    (by Property~\ref{filled columns}).
\ii In each line column, the top square (and only that square) is visited,
    because those are the remaining squares on the top row that can be visited,
    and are just enough (with Property~\ref{filled columns})
    to account for a total of~$5n$.
\ii The vertical line segment in the (unique) column labeled $2a_i+1$
    (from Property~\ref{vertical lines}) visits $a_i$ rows with labels of
    $4n+t_j$ for various~$j$. Of that $4n+t_j$, $4n$ visits are
    accounted for by the full columns of Property~\ref{filled columns},
    so the positions of those line segments are a solution to the
    \lengthoffsets{} problem.
\ii Given placements of the vertical line segments corresponding to a
    valid solution to the \lengthoffsets{} problem, we claim that
    the rest of the path is uniquely determined.
    The set of visited squares in each gadget is uniquely determined
    by the previous properties.  In each pair of columns labeled $2m+3$:
\be
\ii The bottom two squares each have only two visited neighbors, each other
and the square above them, so each of them connects by the path to those
two squares.
\ii For squares in the pair of columns
below the entry point of the length $6n$ horizontal path, the long U-shaped 
path shown in Figure~\ref{fig:path-puzzle} is forced.  For each horizontal pair of squares except the bottom pair, the squares below connect to them.  If the pair squares connect to each other, they form a closed loop, so they must connect to the squares above them instead.
\ii For squares above the entry point of the length $6n$ 
horizontal path, the zig-zag path is forced.
The pair of columns divides evenly into $2 \times 2$ chunks % 'square' and 'block' already mean something
because the entry point of the length $6n$ path is in a row with no label, and the only such rows are at even height.  Let \emph{inside} and \emph{outside} be relative to the center of the block.  In each chunk, the bottom
inside square can't connect to the square
below (because that square is already known to connect down and to the
inside), so it connects to the outside and up.  Similarly, the top outside square
can't connect to the square below (because that square is
already known to connect down and to the bottom inside square),
so it connects to the inside and up (except that in the very top $2
\times 2$ chunk, the top outside square can't connect down and can and must
connect to the outside to satisfy the top row).
\ee
\ee
Thus, each solution to the path puzzle determines a
solution to the \lengthoffsets{} problem, and that solution is uniquely
determined, so the number of solutions to the \lengthoffsets{} problem is
the same as the number of solutions of the path puzzle, and the
reduction is parsimonious as desired.
Note that we are relying on the uniqueness of the lengths $a_i$ from the \lengthoffsets{} problem definition; otherwise, permuting which copy of a duplicated length gets which offset in the path puzzle would generate multiple solutions to \pathpuzzle{} from each solution of \lengthoffsets{}.
\end{proof}

In fact, our reduction can be converted into one giving
complete information (i.e., all row and column labels), demonstrating that
partial information is not the source of \pathpuzzle{}'s hardness.

\begin{theorem}
Perfect-information \pathpuzzle{} (with all row and column labels given as labels) is NP-, \#P- and ASP-hard.
\end{theorem}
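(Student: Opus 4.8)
The plan is to observe that the reduction of Theorem~\ref{thm:lo-to-pp} already assigns a label to every column and to every \emph{even}-numbered row, and that the only labels it leaves blank are those of the $m+1$ odd-numbered rows $1,3,\dots,2m+1$. So it suffices to show that, for the \pathpuzzle{} instances produced by that reduction, each such row has a number of visited squares that is the \emph{same in every solution} and can be computed from the instance in polynomial time. Supplying that number as the row's label turns the construction into a perfect-information one without changing its solution set (we merely add a constraint that every solution already satisfies), so the reduction remains parsimonious and all three hardness consequences carry over.

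To carry this out, fix an odd row $2k+1$ (so $0 \le k \le m$) and count its visited squares in an arbitrary solution, grouping by the gadget they lie in and using the structural properties established inside the proof of Theorem~\ref{thm:lo-to-pp}. The first two and last two columns of each of the $n$ blocks are entirely visited, contributing $4n$. Each lone column has only its top square (in row $2m+3$) visited, contributing nothing. The middle column of block $B_j$ is a single vertical segment occupying rows $2b_j+1$ through $2b_j+2a_j+1$, so it contributes one square to row $2k+1$ exactly when $b_j \le k \le b_j + a_j$; write $D_k$ for the number of blocks for which this holds. Finally, each of the two length-$6n$ horizontal runs of block $B_j$ lies in a blank row, and since it must attach to one of the two ends of that block's vertical segment, it lies in row $2b_j+1$ or row $2b_j+2a_j+1$; hence block $B_j$ contributes $6n$ to row $2k+1$ exactly when $b_j = k$ or $b_j + a_j = k$. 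Writing $L_k$ and $R_k$ for the number of blocks whose segment respectively starts or ends at integer point $k$, the total is therefore $4n + D_k + 6n(L_k + R_k)$.

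It remains to show that $D_k$, $L_k$, and $R_k$ depend only on the instance. Here we crucially use that the \lengthoffsets{} instance was produced by Theorem~\ref{thm:length-offsets}, so Lemma~\ref{thm:endpoint-sets} applies: no segment's left endpoint coincides with another's right endpoint, hence at each integer point $k$ at most one of $L_k, R_k$ is nonzero, i.e.\ $L_k R_k = 0$. Counting, in any solution, the segments covering the unit interval $(k-1,k)$ but not $(k,k+1)$ — exactly those ending at $k$ — and symmetrically gives $t_{k-1} - R_k = t_k - L_k$, i.e.\ $L_k - R_k = t_k - t_{k-1}$ (with the conventions $t_{-1} = t_m = 0$). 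Combined with $L_k R_k = 0$, this forces $L_k = \max(t_k - t_{k-1}, 0)$ and $R_k = \max(t_{k-1} - t_k, 0)$, whence $D_k = t_{k-1} + L_k = \max(t_{k-1}, t_k)$ and $L_k + R_k = |t_k - t_{k-1}|$ — all functions of the given densities alone. The correct label for row $2k+1$ is thus
\[
4n + \max(t_{k-1}, t_k) + 6n\,|t_k - t_{k-1}|,
\]
which we attach to complete the perfect-information instance, preserving parsimony and hence NP-, \#P-, and ASP-hardness.

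The one point that genuinely needs care — and the reason the statement is phrased as a \emph{conversion} of our reduction rather than a generic observation about \pathpuzzle{} — is the solution-independence of the odd-row counts: for an arbitrary \lengthoffsets{} instance the length-$6n$ runs could migrate among the blank rows from one solution to another, and it is precisely the endpoint-disjointness guaranteed by Lemma~\ref{thm:endpoint-sets} (a property of the instances output by Theorem~\ref{thm:length-offsets}, stemming from the ``strictly between $t/4$ and $t/2$'' normalization) that pins the multiset of segment endpoints, and therefore each odd-row count, down to a value computable in advance.
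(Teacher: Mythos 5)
Your proof is correct and follows the same strategy as the paper's: the only labels missing from the reduction of Theorem~\ref{thm:lo-to-pp} are those of the odd rows, and Lemma~\ref{thm:endpoint-sets} (via $L_kR_k=0$) makes each such row's visit count independent of the particular solution and computable from the target densities alone, so it can simply be supplied as a label without disturbing parsimony. Your accounting is in fact more careful than the paper's, which states the missing label as $(6n+2)r_i$ and thereby omits both the $4n$ contributed by the always-full columns and the $\max(t_{k-1},t_k)$ contributed by vertical segments crossing the row; your formula $4n+\max(t_{k-1},t_k)+6n\,|t_k-t_{k-1}|$ appears to be the correct one.
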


\begin{proof}
Recall that the reduction from the proof of Theorem~\ref{thm:lo-to-pp}
(referring to Figure~\ref{fig:path-puzzle})
already provides all column sums and about half of the row sums.
We show how to provide the remaining row sums without giving away information
about the solution to the original \lengthoffsets{} instance.

The rows with missing labels are the $(2i-1)$st rows for $i=1,2,\dots,m$.
In each such row, the solution path must visit $(6n+2)r_i$ cells,
where $r_i$ is the number of segments in the \lengthoffsets{} 
solution which have an endpoint at $i$.
Recall from Lemma~\ref{thm:endpoint-sets} that no line segment shares its
left endpoint with the right endpoint of another.  Thus there is only one type
of endpoint at each coordinate $i$, and we can compute $r_i = |t_{i+1}-t_i|$.
The value of $r_i$ depends only on the \lengthoffsets{} instance, not its
solutions, so we can modify our reduction to specify a label of $(6n+2)r_i$
for row $2i-1$, producing a perfect-information instance of \pathpuzzle{}.
\end{proof}

\section{Open Problems}

One interesting open problem is whether Planar 3DM, where the bipartite
graph of elements and triples is planar, is also ASP-hard and \#P-hard.
This problem is known to be NP-hard \cite{Dyer-Frieze-1986}, and
the variable--clause gadget structure in the proof of Theorem~\ref{thm:3dm}
is close to preserving planarity.  Unfortunately, the initial clause tripling
destroys any planarity in the input, and seems difficult to avoid.
%https://coauthor.csail.mit.edu/6.890/m/r5goGHxhzJK4xBWRk

Another intriguing open problem is whether discrete tomography with
partial information, but no Hamiltonian path constraint, is NP-hard.
If true, this would be another aspect of path puzzles which make them hard.

\section*{Acknowledgements}

We thank Jayson Lynch for useful discussions and debugging help, and Quanquan Liu for help in constructing the figures for this paper.
Most figures were produced using SVG Tiler (\url{https://github.com/edemaine/svgtiler}).

\bibliography{bibliography}
\bibliographystyle{alpha}

\appendix

\clearpage

\section{Solution to the Font Puzzles}
\label{app:font-sol}

\begin{figure}[h]
  \centering
  %\tabcolsep=.002857\linewidth
  %\tabcolsep=.0025\linewidth
  \includegraphics[width=.13\linewidth]{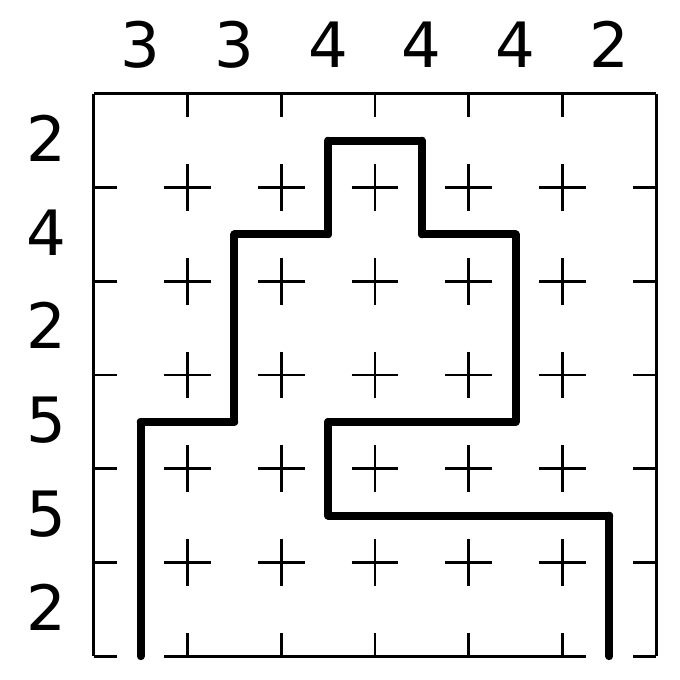}
  \includegraphics[width=.13\linewidth]{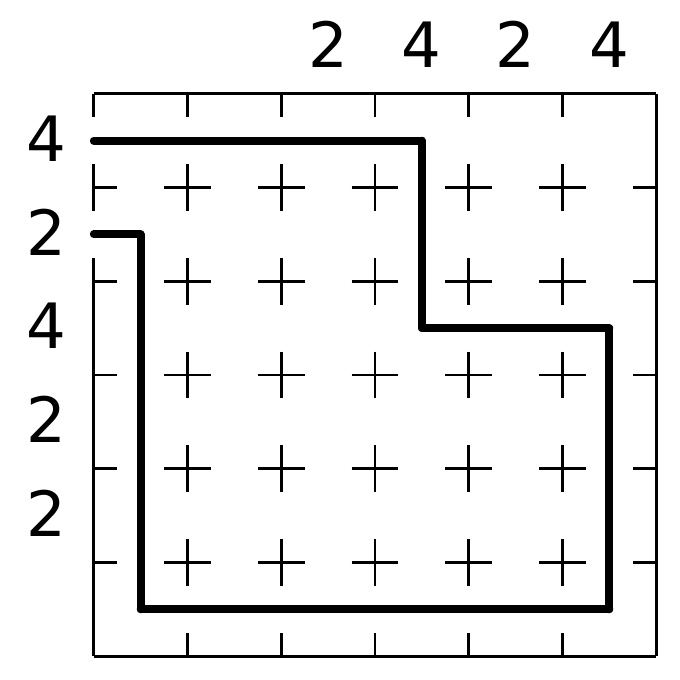}
  \includegraphics[width=.13\linewidth]{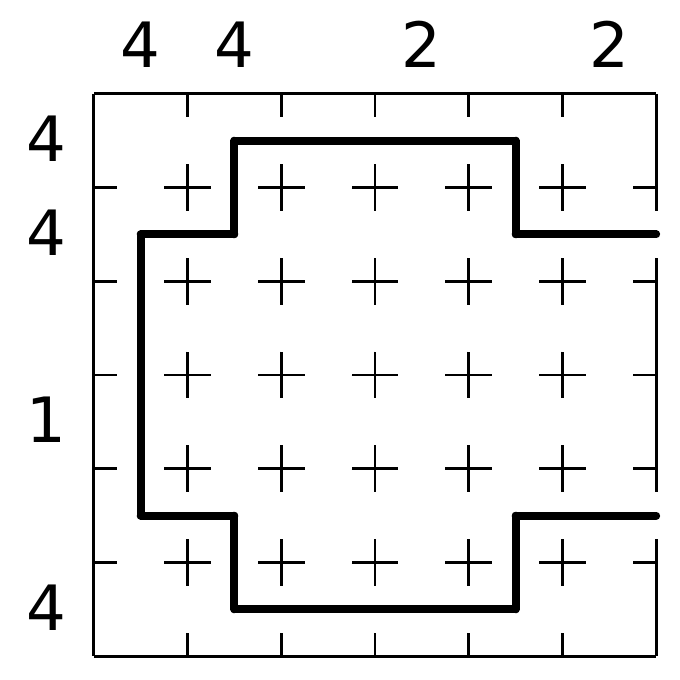}
  \includegraphics[width=.13\linewidth]{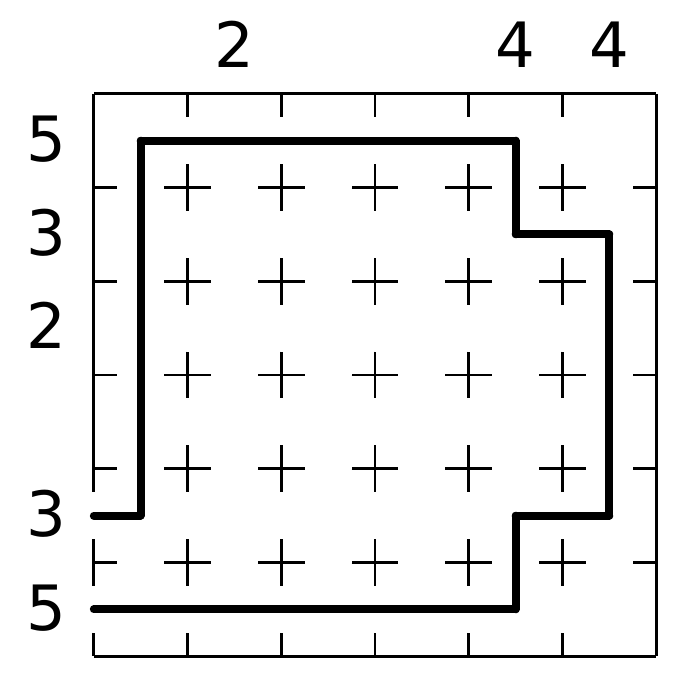}
  \includegraphics[width=.13\linewidth]{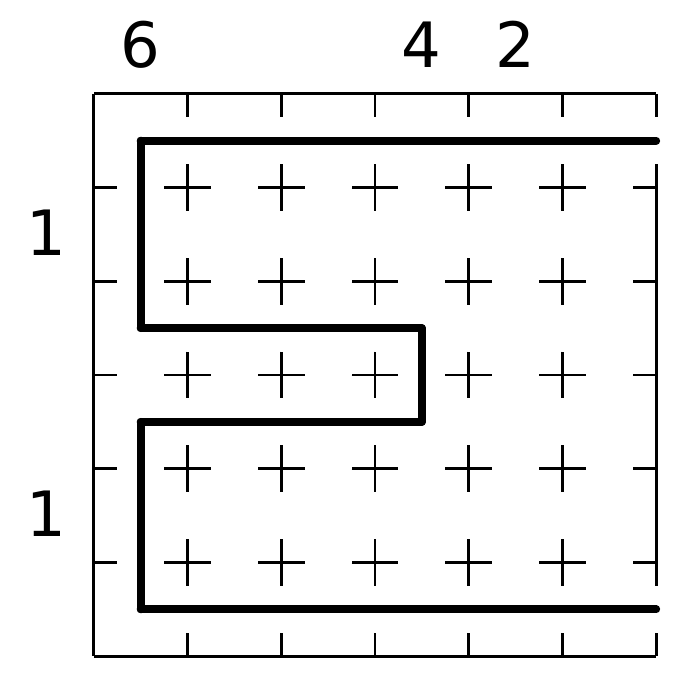}
  \includegraphics[width=.13\linewidth]{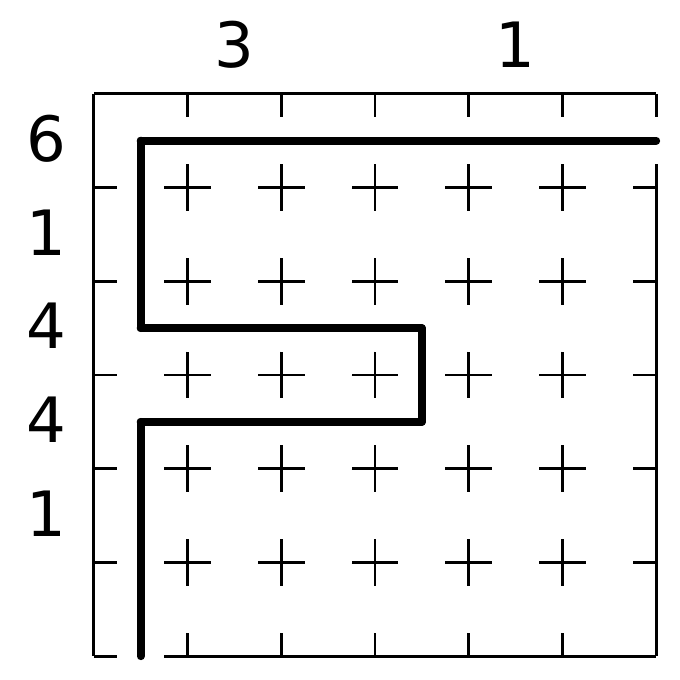}
  \includegraphics[width=.13\linewidth]{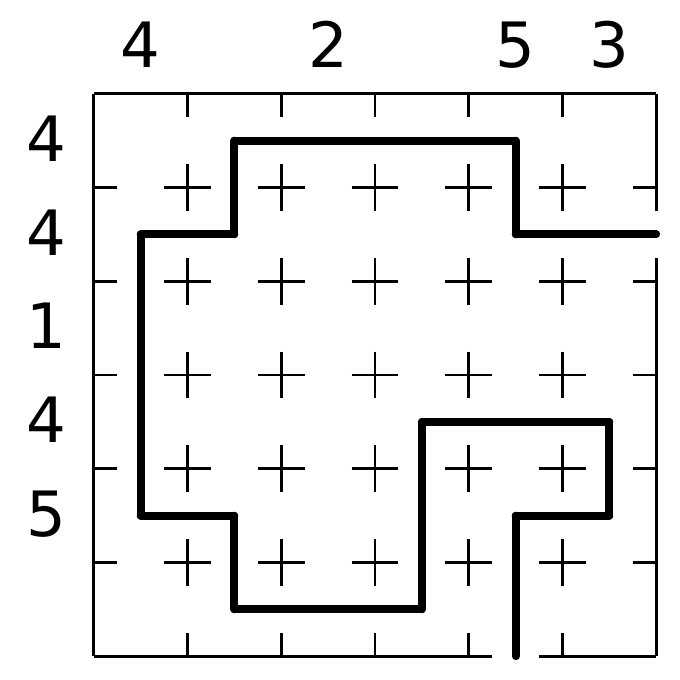}

  \includegraphics[width=.13\linewidth]{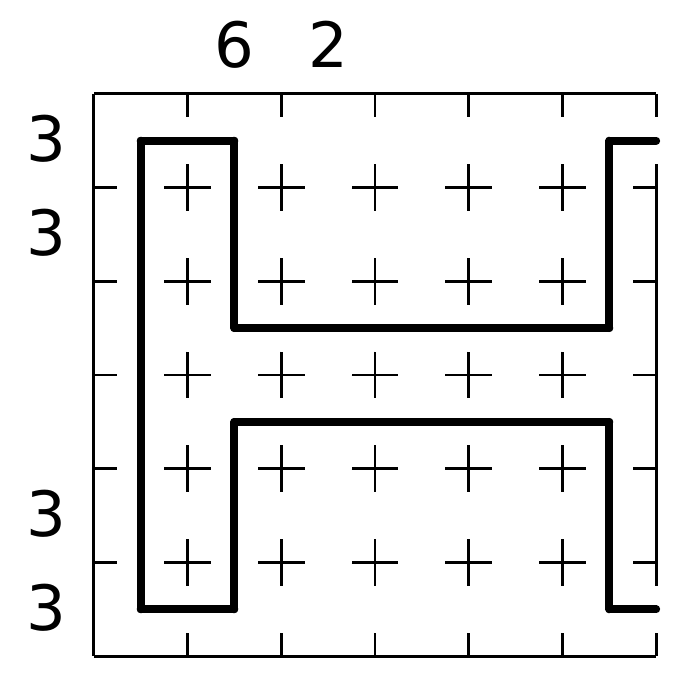}
  \includegraphics[width=.13\linewidth]{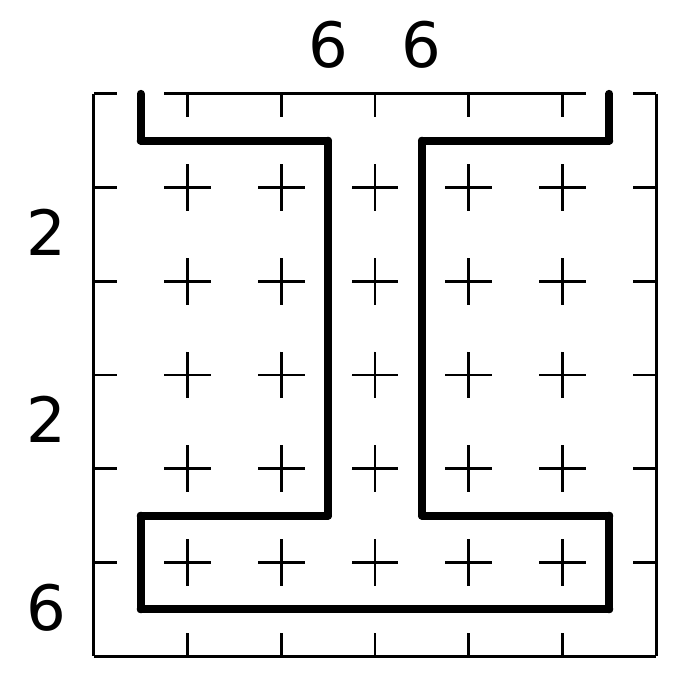}
  \includegraphics[width=.13\linewidth]{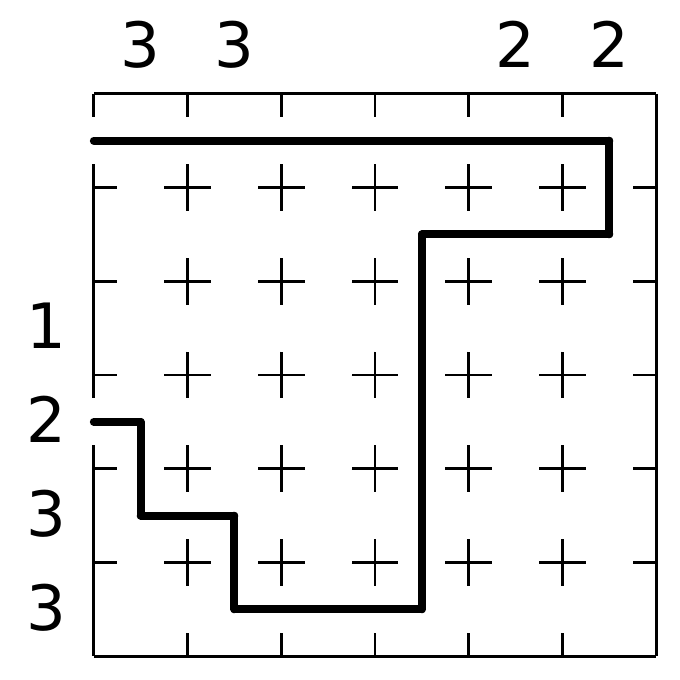}
  \includegraphics[width=.13\linewidth]{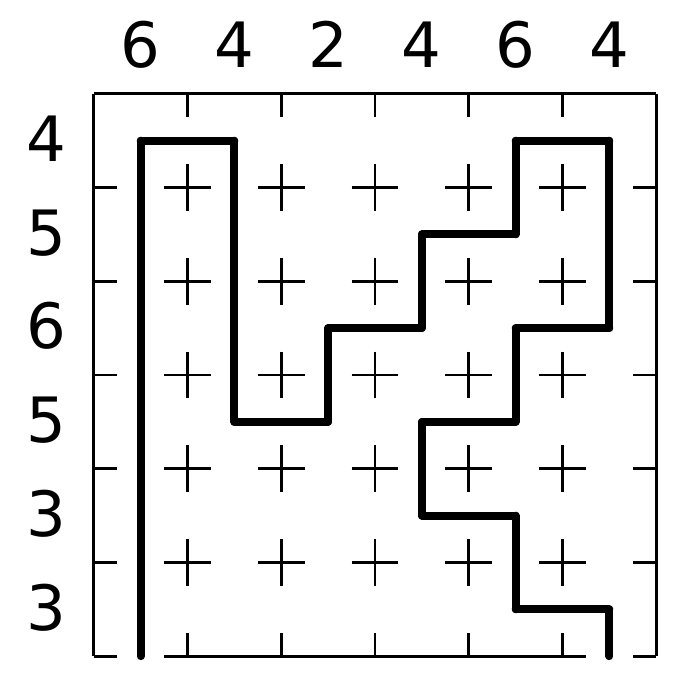}
  \includegraphics[width=.13\linewidth]{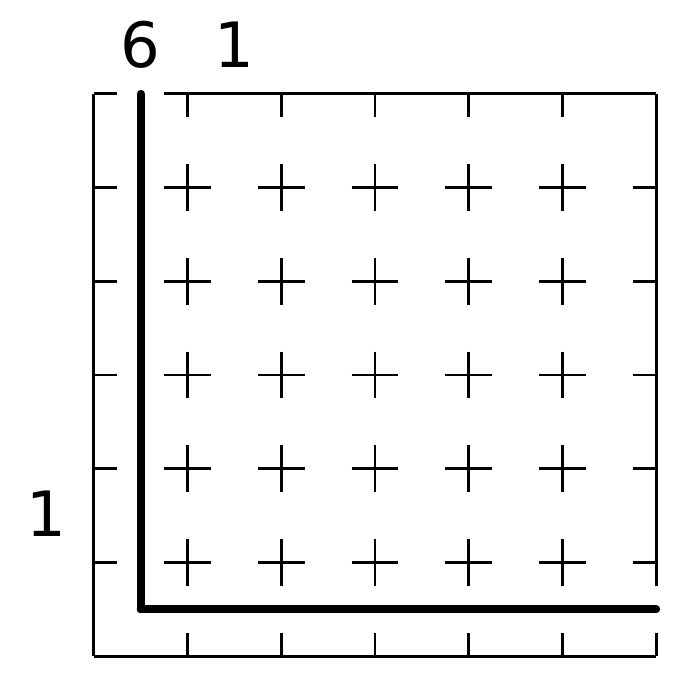}
  \includegraphics[width=.13\linewidth]{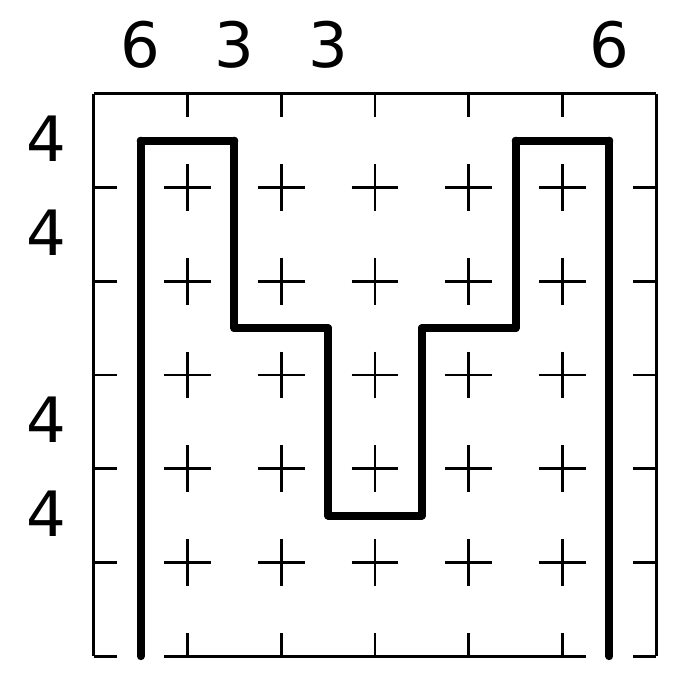}
  \includegraphics[width=.13\linewidth]{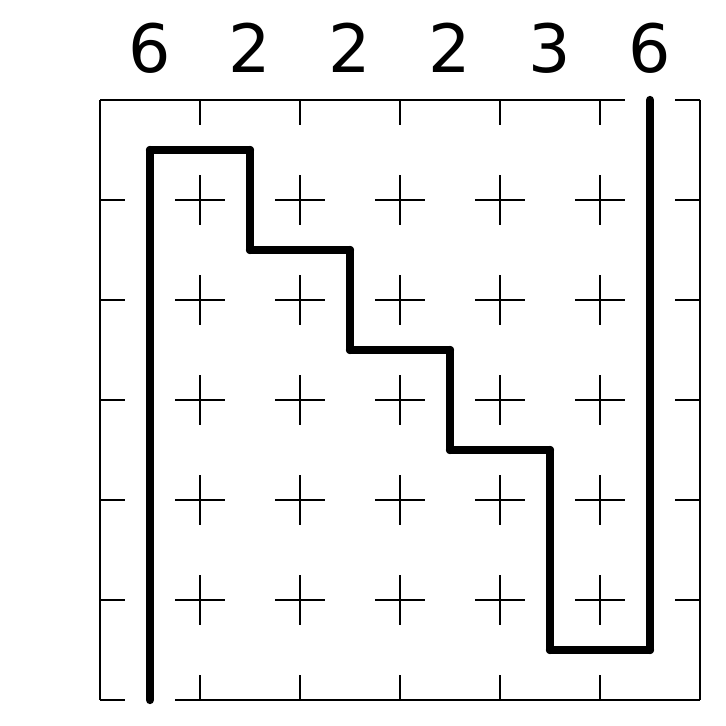}

  \includegraphics[width=.13\linewidth]{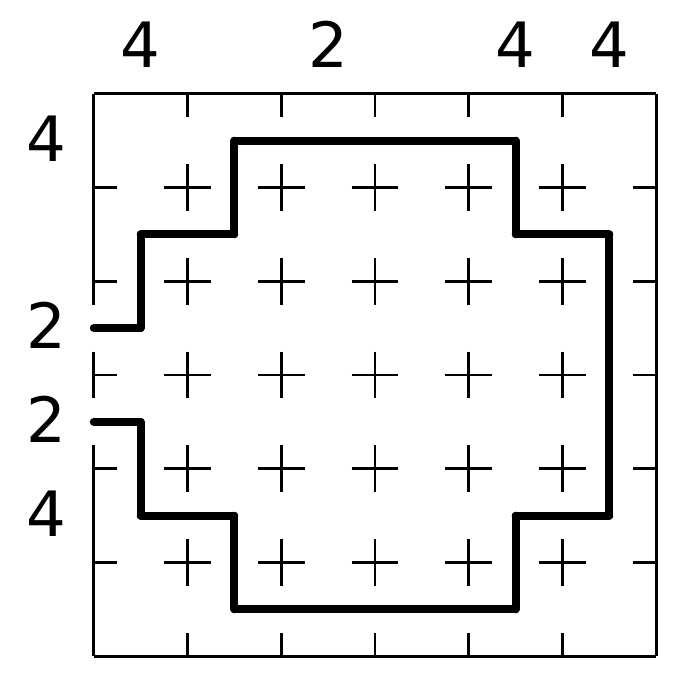}
  \includegraphics[width=.13\linewidth]{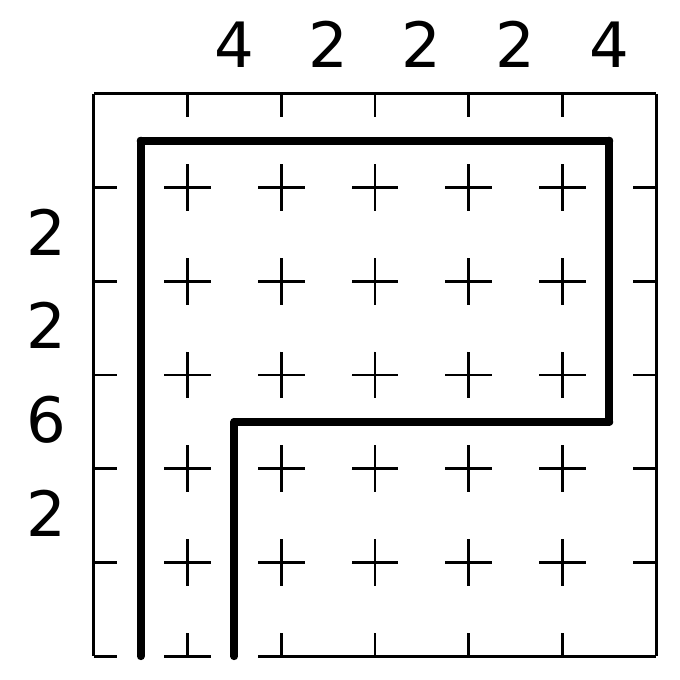}
  \includegraphics[width=.13\linewidth]{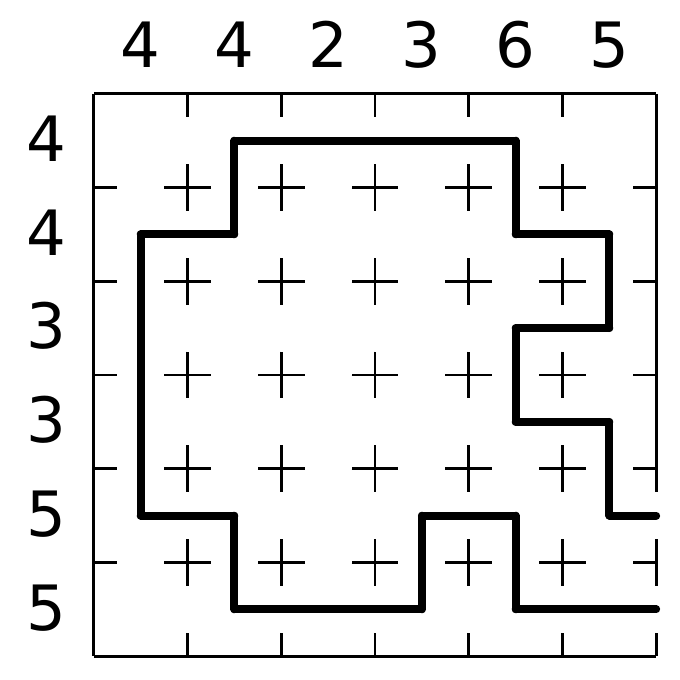}
  \includegraphics[width=.13\linewidth]{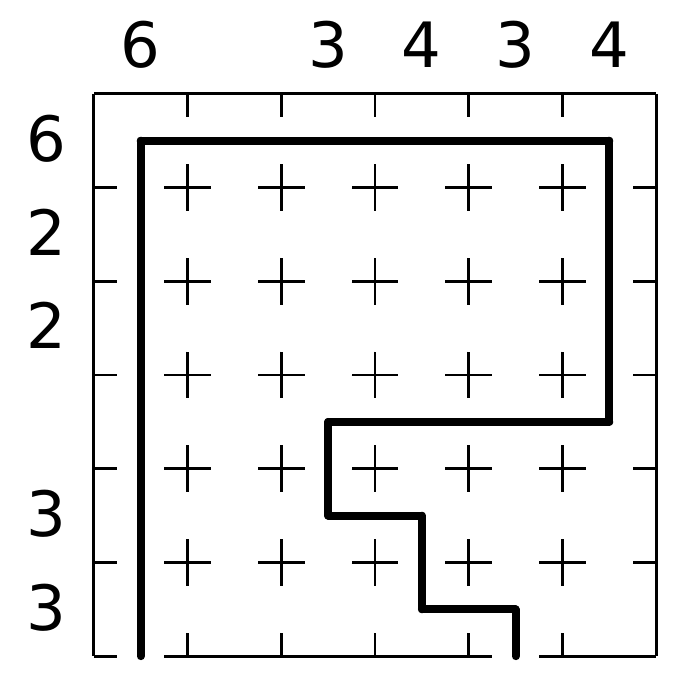}
  \includegraphics[width=.13\linewidth]{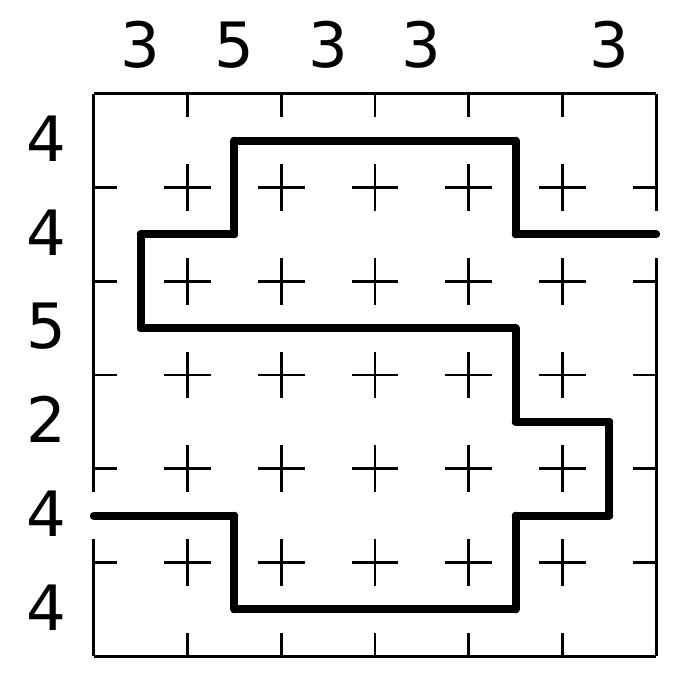}
  \includegraphics[width=.13\linewidth]{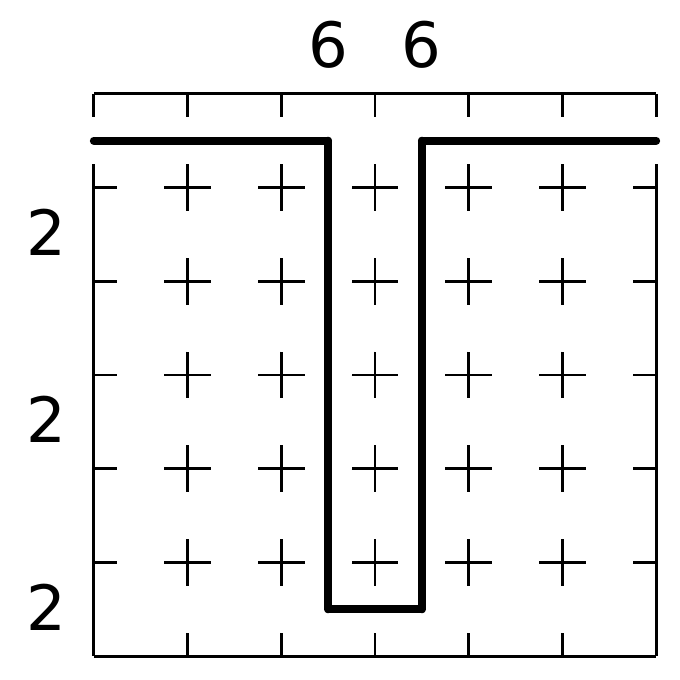}
  \hspace{.13\linewidth}

  \includegraphics[width=.13\linewidth]{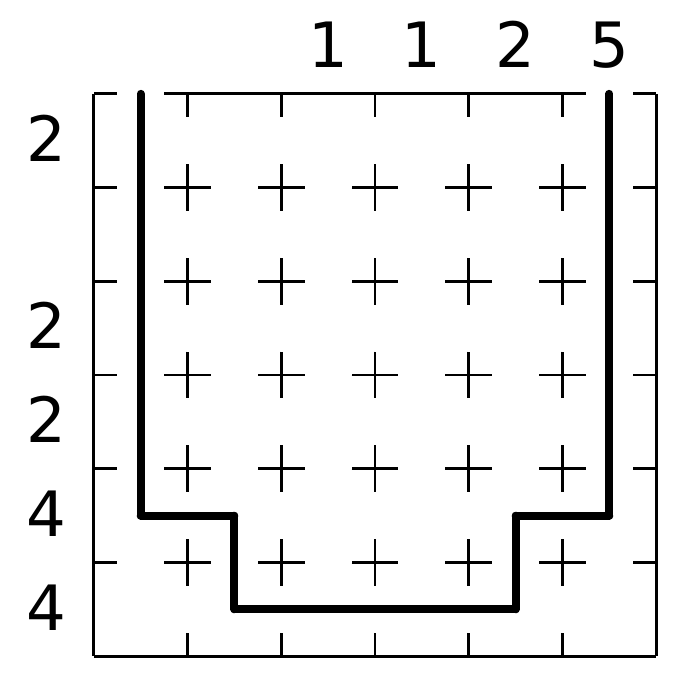}
  \includegraphics[width=.13\linewidth]{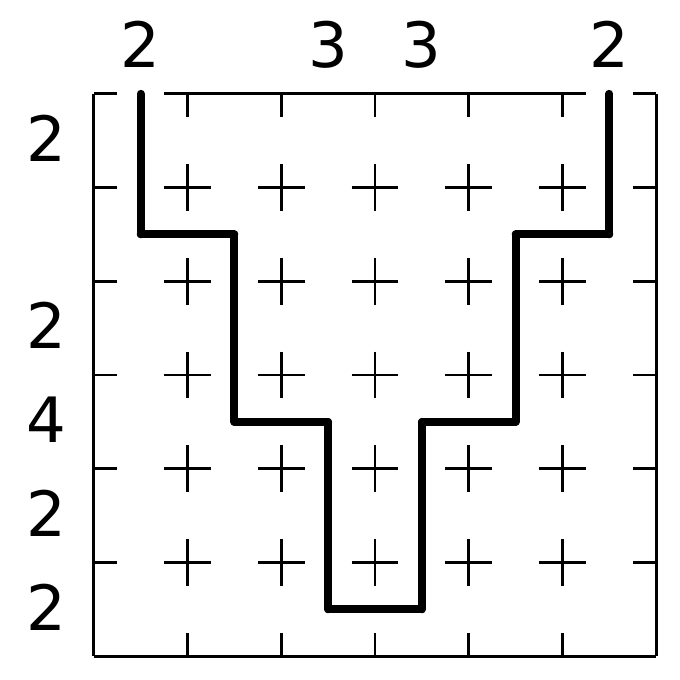}
  \includegraphics[width=.13\linewidth]{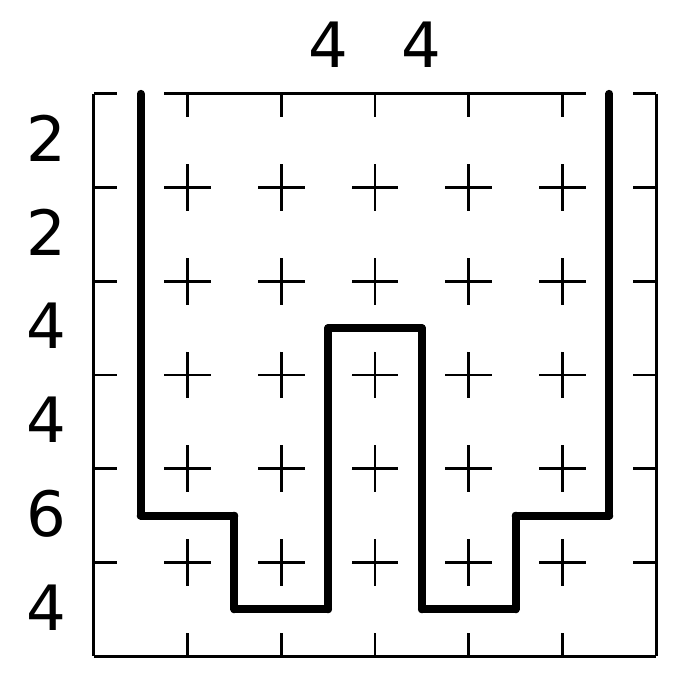}
  \includegraphics[width=.13\linewidth]{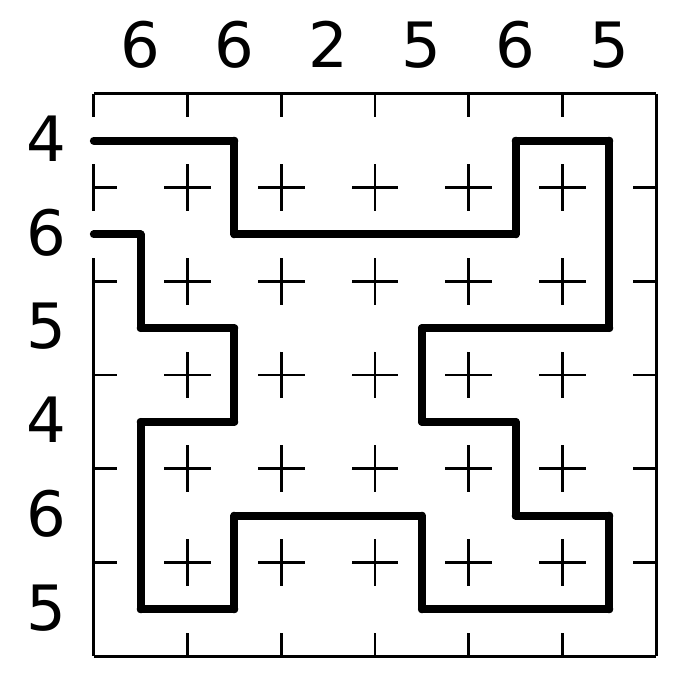}
  \includegraphics[width=.13\linewidth]{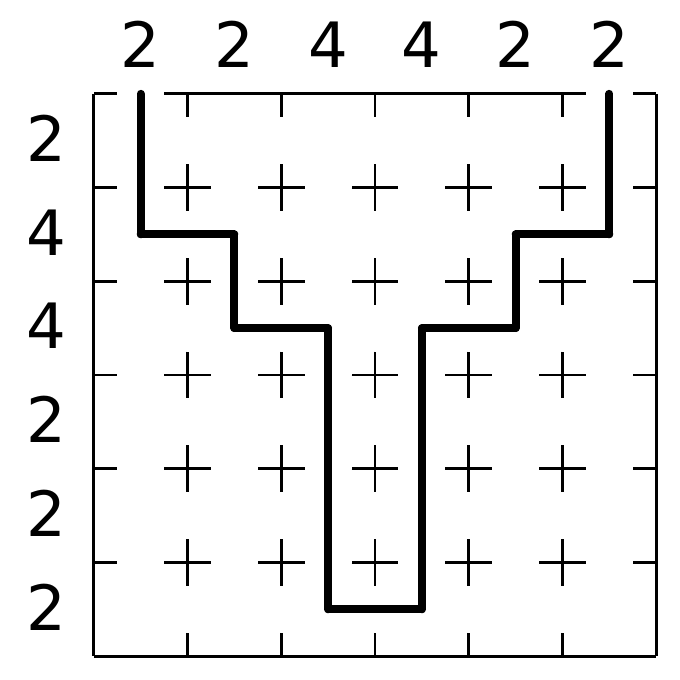}
  \includegraphics[width=.13\linewidth]{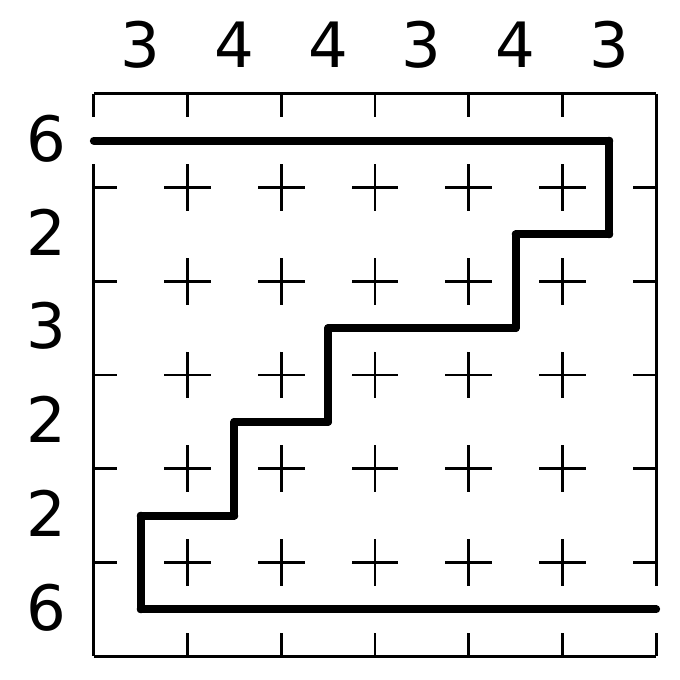}
  \hspace{.13\linewidth}
  \caption{Solved font}
  \label{fig:font-solved}
\end{figure}

\end{document}